\newcommand{\globallabel}[1]{%
  \protected@edef\@currentlabel{\theparentequation a--\alph{equation}}\label{#1}%
}
\newcommand{\e}{\mathrm{e}}
\newcommand{\N}{\mathbb{N}}
\newcommand{\R}{\mathbb{R}}
\newcommand{\Oo}{\mathcal{O}}
\newcommand{\Hm}[1]{\leavevmode{\marginpar{\tiny%
$\hbox to 0mm{\hspace*{-0.5mm}$\leftarrow$\hss}%
\vcenter{\vrule depth 0.1mm height 0.1mm width \the\marginparwidth}%
\hbox to
0mm{\hss$\rightarrow$\hspace*{-0.5mm}}$\\\relax\raggedright #1}}}
\newtheorem{claim}{Claim}[section]
\newtheorem{theorem}[claim]{Theorem}
\newtheorem{lemma}[claim]{Lemma}
\newtheorem{remark}[claim]{Remark}
\newtheorem{remarks}[claim]{Remarks}
\newtheorem{corollary}[claim]{Corollary}
\begin{document}

\title{Spectral optimization for strongly singular Schr\"odinger operators with a star-shaped interaction }

\author{P. EXNER$^{1,2}$ and S. KONDEJ$^{3}$}
\date{\small $^1$Doppler Institute for Mathematical Physics and Applied Mathematics, \\ Czech Technical University in Prague, B\v{r}ehov\'{a} 7, 11519 Prague, Czechia \\ $^2$Nuclear Physics Institute CAS, 25068 \v{R}e\v{z} near Prague, Czechia \\
$^3$Institute of Physics, University of Zielona G\'ora, ul.\ Szafrana 4a, \\ 65246 Zielona G\'ora, Poland \\ \emph{e-mail: exner@ujf.cas.cz, s.kondej@if.uz.zgora.pl}}

\maketitle

\begin{abstract}
\noindent We discuss the spectral properties of singular Schr\"odinger operators in three dimensions with the interaction supported by an equilateral star, finite or infinite. In the finite case the discrete spectrum is nonempty if the star arms are long enough. Our main result concerns spectral optimization: we show that the principal eigenvalue is uniquely maxi\-mized when the arms are arranged in one of the known five sharp configurations known as \color{black}{solutions of the closely related} Thomson problem.
\end{abstract}

\medskip

\noindent \textbf{Mathematics Subject Classification (2010).} 81Q10, 35J10.

\medskip

\noindent \textbf{Keywords.} Singular Schr\"odinger operator, three dimensions, spectral optimization, star-shaped interaction

\section{Introduction}\label{s:intro}
\setcounter{equation}{0}

Isoperimetric inequalities represent a traditional problem in mathematical physics with the first fundamental results almost a century old \cite{Fa23, Kr25}. Recent years witnessed a new wave of interest to them, for instance, in the context of Robin Laplacians, cf. \cite{FK15, KL17} and references therein. Another context in which such questions arise concerns singular Schr\"odinger operators which could be formally written as
 \begin{equation} \label{formal}
 H_{\alpha,\gamma} = -\Delta+\tilde\alpha\delta(x-\gamma)\,.
 \end{equation}
If $\gamma$ is a loop of a fixed length in the plane, e.g., it is known that the principal eigenvalue is maximized by a circle \cite{EHL06}. It three dimensions the problem is more complicated and decisive quantity is the capacity of $\gamma\,$ \cite{EF09}, note that similar result can be obtained for Dirac operators with a shell interaction \cite{AMV16}.

One can consider also other shapes of the interaction support. In \cite{EL17}, for instance, the support $\gamma$ in the shape of an equilateral planar star is discussed and it is proved that the principal eigenvalue is then maximized by the configuration of the maximum symmetry when all the angles between the neighboring star arms coincide. In the present letter we address the analogous question for three-dimensional Schr\"odinger operators which is considerably more complicated. One reason is the character of the singular interaction which is more singular if its support is of codimension two \cite{AGHH}. What is more important, however, is that the geometry of the star characterized by the distribution of its projection to the unit sphere is much richer, and consequently, the answer depends strongly on the number $N$ of the star arms; one may recall in the connection Thomson's problem \cite{Th1904} still not fully solved more than a century after it was formulated. We manage to show that the principal eigenvalue of the corresponding singular Schr\"odinger operator is uniquely maximized by the known sharp configurations \cite{CK07} for $N=2,3,4,6,$ and $12$, leaving a lot of room for investigation of stars with other values of~$N$.

The contents and the main results of the paper can be summarized  as follows. For a finite star we prove in~Sec.~\ref{s:existence} the existence of the discrete spectrum provided the $\delta$ interaction is sufficiently strong, and for an infinite star, if the support of interaction does not coincide with a straight line. On the other hand, if the interaction in the finite case is weak enough the discrete spectrum {\color{black}is void as will be proved in} Sec.~\ref{s:nonexistence}. Furthermore, in~Sec.~\ref{s:small} we show that there is no minimum since the threshold  of the spectrum can be arbitrarily low for small enough angle between a pair of arms. Finally, in Sec.~\ref{s:optim} we turn to the main topic and demonstrate the above mentioned configurations optimizing the principal eigenvalue.

\section{Preliminaries}\label{s:prelim}
\setcounter{equation}{0}

Our first task is to give a proper meaning to the formal operator \eqref{formal}. In general, the way how to do that is known -- cf.~\cite{EK02, EK08, EK16} and references there -- so we can focus on properties associated with the particular shape of the interaction support.

First we have to introduce some notation. Given $L\in(0,\infty]$, finite or infinite, we consider a family of $N$ line  segments, being the graphs of linear functions $\gamma_i:\: [0,L]\to \R^3$, emanating from the same point $\gamma_i(0)$ which can be without loss of generality set as the coordinate origin. With an abuse of notation we identify the edges with the functions $\gamma_i:\: [0,L]\to\R^3$ that parametrize them. It is clear that up to Euclidean transformations each such star is uniquely determined by the intersections $\bar\gamma_i$ of $\gamma_i$ (or their line extensions) with the unit sphere $S^2$ centered at the origin. The geometric quantity which will be important in the following is the distance between a pair of points of $\gamma$ which is expressed in terms of the used parameters as
 \begin{equation} \label{distance}
 |\gamma_i(s) - \gamma_j(t)|^2= s^2+t^2 - st(2 -|\bar\gamma_i - \bar\gamma_j |^2)\,.
 \end{equation}

The most direct way to define the operator of our interest is to impose suitable boundary conditions in cross planes to the arms $\gamma_i$, namely those that determine the two-dimensional point interaction in the plane with a parameter $\alpha\in\R\,$ \cite[Chap.~I.5]{AGHH}. Recall that the corresponding Hamiltonian has a single negative eigenvalue
 \begin{equation} \label{2D ev}
\epsilon^\alpha = -4\,\mathrm{e}^{2(-2\pi\alpha +\psi(1))}\,,
 \end{equation}
where $\psi$ is the digamma function and $-\psi(1)\approx 0.577$ is the Euler-Mascheroni constant. Given $f\in W^{2,2}_{\mathrm{loc}} (\R^3\setminus\gamma)$ we pick a point $s\in\gamma_i$ and its circular flat neighborhood $U_i$ in the plane perpendicular to $\gamma_i$ which is additionally assumed to be disjoint with $\gamma\setminus\gamma_i$; with the exception of the star vertex this can be always achieved provided $\rho = \rho(s)$, the radius of $U_i$, is small enough. Furthermore, let us consider the restriction $f\!\upharpoonright_{U_i}$ which is locally, that is in $U_i$, a distribution. We assume that the limits
\begin{subequations}
\begin{align}
\label{eq-bc1} &\Xi(f)(s) :=- \lim_{\rho \to 0 }\, \frac{1}{\ln \rho }
   f\!\upharpoonright_{U_i}\!\!(s)\,,
\\ \label{eq-bc2} & \Omega (f) (s):= \lim_{\rho \to 0 } \big[
f\!\upharpoonright_{U_i}\!\!(s) + \Xi(f)(s)\ln  \rho \big]\,
\end{align}
\globallabel{eq-bc}
\end{subequations}

\noindent exist almost everywhere in $(0,L)$ for any $i=1,\dots,N$. Imposing then the boundary con\-di\-ti\-ons coupling these generalized boundary values,
\begin{equation}\label{eq-bc3}
2\pi \alpha \Xi (f)=\Omega (f)\,,
\end{equation}
we get a self-adjoint operator $H_{\alpha,\gamma}$ with the domain
$$
D(H_{\alpha,\gamma}):= \{f\in W^{2,2}_{\mathrm{loc}} (\R^3\setminus \gamma )\cap L^2 \,:\, f\;\text{satisfies}\; \eqref{eq-bc}\}
$$
which acts as
$$
H_{\alpha,\gamma}f(x)=-(\Delta f)(x)\,,\quad x\in \R^3 \setminus \gamma \,.
$$
This construction yields a self-adjoint operator which gives meaning to the formal expression \eqref{formal}. It is useful to keep in mind that $\alpha$, in contrast to $\tilde\alpha$ in \eqref{formal}, is the `true' coupling constant. The perturbation is not additive, in particular, its absence corresponds to $\alpha=\infty$.

Our interest here concerns the discrete spectrum of $H_{\alpha,\gamma}$. As in the papers quoted above, an efficient way to study it is to employ Birman-Schwinger principle. For the starlike interaction supported on $\gamma$ we introduce the operator-valued matrix
 \begin{equation} \label{BSoperator}
Q_{\kappa,\gamma}:= [T_{\kappa,\gamma }^{ij}]_{i,j=1}^N
 \end{equation}
acting in $\bigoplus_{i=1}^N L^2 ([0,L])$, where $T_{\kappa,\gamma }^{ij}:\: L^2 ([0,L])\to L^2 ([0,L])$ are  integral operators with the kernels
 \begin{equation} \label{BSoperator2}
\left\{
  \begin{array}{ll}
    T_{\kappa;s,t} (|\bar\gamma_i - \bar\gamma_j |^2):= G_{\kappa} (|\gamma_i(s) - \gamma_j(t)|) \;\;& \hbox{if} \;\;i\neq j \\[1em]
    G^{\mathrm{reg}}_\kappa (\gamma_i(s)- \gamma_i(t)) \;\;& \hbox{if} \;\;i=j
  \end{array}
\right.
 \end{equation}
Here $G_{\kappa}$ is the integral kernel of $(-\Delta +\kappa^2)^{-1}$ in $L^2(\R^3)$, explicitly
\begin{equation}\label{eq-kernelG}
 G_{\kappa }(x,x') = \frac{1}{4\pi}\frac{\e^{-\kappa |x-x'|}}{|x-x'|}\,,
\end{equation}
and $G^{\mathrm{reg}}_\kappa$ is the regularized kernel with the logarithmic singularity removed as described in \eqref{T_ii_reg} below. For the sake of simplicity we will write $T_{\kappa }^{ij}=T_{\kappa, \gamma }^{ij}$ if there is no risk of confusion.

The Birman-Schwinger principle allowing us to rephrase the investigation of $\sigma_\mathrm{disc}(H_{\alpha,\gamma})$ as analysis of the operator $Q_{\kappa,\gamma}$ can be expressed concisely as
\begin{equation}\label{BSprinciple}
  f\in \ker(\alpha-Q_{\kappa,\gamma}) \, \Leftrightarrow \, H_{\alpha,\gamma }g_{\kappa }= -\kappa^2 g_{\kappa} \quad\mathrm{where }\;\; g_{\kappa }=G_{\kappa }\ast f \,.
\end{equation}
In particular, one can infer from here the positivity of the ground state eigenfunction using the following claim which is obtained by mimicking the argument of Lemma~4.1 in \cite{EK08}:

\begin{lemma} \label{l:positive}
Let $\epsilon_\gamma$ denote the principal eigenvalue of $H_{\alpha,\gamma}$, then any element $f\in \ker(Q_{\sqrt{-\epsilon_\gamma}, \gamma}-\alpha)$ is a multiple of a unique positive function.
\end{lemma}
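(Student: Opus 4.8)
The plan is to follow the strategy of Lemma~4.1 in \cite{EK08}, adapting it to the star geometry. The Birman--Schwinger correspondence \eqref{BSprinciple} tells us that $f\in\ker(Q_{\kappa,\gamma}-\alpha)$ with $\kappa=\sqrt{-\epsilon_\gamma}$ produces the principal eigenfunction $g_\kappa=G_\kappa\ast f$ of $H_{\alpha,\gamma}$. Since $g_\kappa$ is a ground state, I would first argue that it may be chosen nonnegative: this is the standard Perron--Frobenius-type fact for Schr\"odinger-type operators bounded from below, here obtained by noting that $|g_\kappa|$ has the same form-value (the operator acts as $-\Delta$ away from $\gamma$ and the singular coupling is of the same nature), so if $g_\kappa$ were sign-changing we could replace it by $|g_\kappa|$ and still get a ground state. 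Then, because $g_\kappa=G_\kappa\ast f$ solves $(-\Delta+\kappa^2)g_\kappa=0$ off $\gamma$ and is nonnegative, the strong maximum principle (or, equivalently, the strict positivity of the free resolvent kernel $G_\kappa$ in \eqref{eq-kernelG}) forces $g_\kappa>0$ everywhere on $\R^3\setminus\gamma$, hence a.e.

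The next step is to transfer positivity of $g_\kappa$ back to $f$. From the boundary condition \eqref{eq-bc1}--\eqref{eq-bc3}, on each arm $\gamma_i$ the leading coefficient $\Xi(g_\kappa)(s)$ of the logarithmic singularity of $g_\kappa$ across $\gamma_i$ is, up to a positive constant, exactly the density $f_i(s)$ (this is how the singular operator is built: the $\delta$-type source along $\gamma$ has strength $f$). Since $g_\kappa$ is strictly positive and develops a logarithmic blow-up $\sim -\Xi(g_\kappa)(s)\ln\rho$ as one approaches the arm, strict positivity of $g_\kappa$ near $\gamma_i$ pins down the sign of $\Xi(g_\kappa)(s)$, whence $f_i\ge 0$ for every $i$, i.e.\ $f\ge 0$ as an element of $\bigoplus_i L^2([0,L])$. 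A short additional argument upgrades this to $f>0$ a.e.: if $f_i$ vanished on a set of positive measure, then $g_\kappa=G_\kappa\ast f$ would, via the strict positivity of $G_\kappa$ and the structure of the convolution restricted to the arms, still be analytic/positive away from $\gamma$ but its singular trace along that part of $\gamma_i$ would vanish, contradicting either the eigenvalue equation or the irreducibility coming from the star being connected at the origin.

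Finally, uniqueness. Suppose $f,\tilde f\in\ker(Q_{\kappa,\gamma}-\alpha)$ are linearly independent; by the positivity just established, after normalization both are strictly positive, but then a suitable real linear combination $f-c\tilde f$ is a nonzero element of the kernel that changes sign (choose $c$ so that it vanishes on a set of positive measure while being negative somewhere), contradicting the fact that \emph{every} kernel element must be of one sign. Hence $\dim\ker(Q_{\kappa,\gamma}-\alpha)=1$ and its generator is, up to scalar multiples, the unique positive function. The main obstacle I anticipate is the rigorous handling of the boundary traces in the star geometry at and near the vertex $\gamma(0)$: the neighborhoods $U_i$ used to define $\Xi,\Omega$ degenerate there, so one must argue that the vertex is a measure-zero exceptional set that does not affect either the positivity transfer or the connectedness/irreducibility argument. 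Away from the vertex everything reduces, arm by arm, to the codimension-two situation treated in \cite{EK08}, so the adaptation is essentially bookkeeping once the vertex is dispatched.
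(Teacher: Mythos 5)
Your central mechanism -- read off the density from the logarithmic trace, i.e.\ note that $g_\kappa=G_\kappa\ast f$ has $\Xi(g_\kappa)=\tfrac{1}{2\pi}f_i$ along $\gamma_i$, so a strictly positive ground state forces $f\ge 0$ -- is sound and is in the spirit of the argument the paper simply imports from \cite[Lemma 4.1]{EK08}. However, two steps are genuinely gapped. First, the replacement $g_\kappa\mapsto|g_\kappa|$ is not available "for free'': $H_{\alpha,\gamma}$ is defined through the boundary conditions \eqref{eq-bc3}, the perturbation is explicitly \emph{not} additive, and $|g_\kappa|$ generally fails to lie in the operator domain (it is not $W^{2,2}_{\mathrm{loc}}$ across a nodal set), so the assertion that $|g_\kappa|$ "has the same form-value'' presupposes a quadratic-form characterization of this codimension-two interaction which neither you nor the paper provides; that is precisely the nontrivial technical point you are sliding over.

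Second, and more seriously, the uniqueness part is circular. Replacing $g_\kappa$ by $|g_\kappa|$ changes the corresponding $f$, so your steps only produce \emph{some} nonnegative ground state; they do not show that an \emph{arbitrary} element of $\ker(Q_{\kappa,\gamma}-\alpha)$ is signed. Yet your final contradiction invokes exactly that ("every kernel element must be of one sign''), which, if the kernel were two-dimensional, would be false for the sign-changing combinations you are trying to exclude. The standard way to close this is to establish that the resolvent of $H_{\alpha,\gamma}$ (via the Krein-type formula behind \eqref{BSprinciple}), or equivalently the Birman--Schwinger family with its strictly positive off-diagonal kernels \eqref{BSoperator2} and the harmless diagonal regularization \eqref{T_ii_reg}, is positivity improving, and then invoke the Perron--Frobenius theorem, cf.\ \cite[Thms.~XIII.43--44]{RS}, which yields simplicity and strict positivity in one stroke -- this is essentially what the cited argument of \cite{EK08} does. (Alternatively, once one strictly positive ground state is in hand, degeneracy is excluded because a second eigenfunction could be chosen orthogonal to it while two a.e.\ positive functions cannot be orthogonal -- but this again requires first knowing that \emph{every} ground-state eigenfunction is signed. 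Also, your device of choosing $c$ so that $f-c\tilde f$ "vanishes on a set of positive measure'' is not in general possible; and the strict positivity $f_i>0$ a.e.\ is obtained more cleanly by evaluating the eigenvalue equation $\alpha f_i=(Q_{\kappa,\gamma}f)_i$ on the set $\{f_i=0\}$, where the right-hand side is strictly positive unless $f=0$.)
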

\noindent Finally, let us mention the dependence of the spectrum on the arm length~$L$.
\begin{lemma} \label{l:monotone}
The eigenvalues of $H_{\alpha,\gamma}$ are monotonously decreasing functions of $L$.
\end{lemma}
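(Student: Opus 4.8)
The plan is to pass, via the Birman--Schwinger correspondence \eqref{BSprinciple}, to a monotonicity statement for the operator matrix $Q_{\kappa,\gamma}$ of \eqref{BSoperator}--\eqref{BSoperator2}, the crucial point being that this matrix depends on the arm length $L$ only through the interval on which its entries act. Precisely, for $0<L<L'\le\infty$ let $\gamma_L$ and $\gamma_{L'}$ be the stars with the same arm directions $\bar\gamma_i$ and arm lengths $L$ and $L'$, and let $P_L$ be the orthogonal projection in $\bigoplus_{i=1}^N L^2([0,L'])$ onto the functions supported, arm by arm, in $[0,L]$; identifying $\mathrm{ran}\,P_L$ with $\bigoplus_{i=1}^N L^2([0,L])$ by extension by zero, one has that $Q_{\kappa,\gamma_L}$ is exactly the compression $P_L Q_{\kappa,\gamma_{L'}} P_L$ of $Q_{\kappa,\gamma_{L'}}$ to $\mathrm{ran}\,P_L$, because the kernels in \eqref{BSoperator2} are fixed by \eqref{distance}, \eqref{eq-kernelG} and the regularization producing $G^{\mathrm{reg}}_\kappa$, none of which involves $L$. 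Moreover $G_\kappa$ is pointwise strictly decreasing in $\kappa$ by \eqref{eq-kernelG}, and so is $G^{\mathrm{reg}}_\kappa$, whence $\kappa\mapsto Q_{\kappa,\gamma}$ is a continuous, strictly decreasing self-adjoint family.

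Order now the eigenvalues of $Q_{\kappa,\gamma}$ lying above $\sup\sigma_{\mathrm{ess}}(Q_{\kappa,\gamma})$ from the top as $\mu_1(\kappa,\gamma)\ge\mu_2(\kappa,\gamma)\ge\cdots$, counted with multiplicity. By \eqref{BSprinciple} the eigenvalues of $H_{\alpha,\gamma_L}$ below the essential spectrum are precisely the numbers $-\kappa^2$ with $\alpha\in\{\mu_n(\kappa,\gamma_L)\}_n$, and since each branch $\kappa\mapsto\mu_n(\kappa,\gamma_L)$ is continuous and strictly decreasing, the equation $\mu_n(\kappa,\gamma_L)=\alpha$ has at most one root $\kappa=\kappa_n(L)$, the $n$-th eigenvalue of $H_{\alpha,\gamma_L}$ being $\epsilon_n(L)=-\kappa_n(L)^2$; for $n=1$ this is the principal eigenvalue, which is simple with a positive eigenfunction by Lemma~\ref{l:positive}. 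The Courant--Fischer max--min formula applied to the compression identity gives $\mu_n(\kappa,\gamma_L)\le\mu_n(\kappa,\gamma_{L'})$ for every $\kappa$ and $n$, so whenever $\kappa_n(L)$ exists so does $\kappa_n(L')$, and from
\begin{equation*}
  \mu_n\big(\kappa_n(L),\gamma_{L'}\big)\;\ge\;\mu_n\big(\kappa_n(L),\gamma_L\big)\;=\;\alpha\;=\;\mu_n\big(\kappa_n(L'),\gamma_{L'}\big)
\end{equation*}
together with the monotonicity of $\mu_n(\cdot,\gamma_{L'})$ we obtain $\kappa_n(L)\le\kappa_n(L')$, i.e. $\epsilon_n(L)\ge\epsilon_n(L')$. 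Thus every eigenvalue is a non-increasing function of $L$, and the argument applies verbatim with $L'=\infty$; strict decrease of the ground state follows because $Q_{\kappa,\gamma}$ is positivity improving --- the property underlying Lemma~\ref{l:positive} --- so its top eigenfunction is almost everywhere positive, hence not contained in $\mathrm{ran}\,P_L$, which makes the comparison $\mu_1(\kappa,\gamma_L)<\mu_1(\kappa,\gamma_{L'})$ strict.

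The step I expect to need the most care is not this variational comparison but the justification of the structural facts it uses: that $Q_{\kappa,\gamma}$ is self-adjoint with discrete spectrum of finite multiplicity above its essential spectrum, that it is continuous and strictly monotone in $\kappa$ --- here the only real computation is the $\kappa$-dependence (and the $\kappa\to\infty$ asymptotics) of the regularized diagonal blocks $T_\kappa^{ii}$, whose logarithmic renormalization has to be controlled --- and that the Birman--Schwinger correspondence \eqref{BSprinciple} genuinely matches the $n$-th eigenvalue of $H_{\alpha,\gamma}$ with the $n$-th branch $\mu_n$, including a correct treatment of the essential spectrum, which for a finite star sits at $0$ and for an infinite one at $\epsilon^\alpha$. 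All of these are properties of $Q_{\kappa,\gamma}$ that accompany the construction recalled in Section~\ref{s:prelim}; once they are in place, the lemma is just the compression identity combined with Courant--Fischer.
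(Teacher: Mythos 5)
Your argument is essentially correct, but it takes a genuinely different route from the paper's. The paper exploits the self-similarity of the equilateral star: using the boundary condition \eqref{eq-bc3} one checks that the dilation $x\mapsto \zeta x$ makes $H_{\alpha,\gamma}$ unitarily equivalent, up to the overall factor $\zeta^{-2}$, to the operator for the original star with the shifted coupling constant $\alpha'=\alpha-\frac{1}{2\pi}\ln\zeta$ of \eqref{scaling}; a change of $L$ is thus traded for a change of $\alpha$, and the conclusion follows from the monotone dependence of the Birman--Schwinger eigenvalues on the energy \cite{Kr53} via \eqref{BSprinciple}. You instead keep $\alpha$ fixed and compare the two stars directly, noting that $Q_{\kappa,\gamma_L}$ is the compression of $Q_{\kappa,\gamma_{L'}}$ to functions supported in $[0,L]$ (the kernels in \eqref{BSoperator2} being $L$-independent), and then combine Courant--Fischer with the strict decrease of the branches $\mu_n(\cdot,\gamma)$ in $\kappa$. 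The scaling argument buys brevity, but it relies on the star being equilateral and cannot, for instance, compare a finite star with an infinite one; your compression argument is more robust --- it covers non-uniform lengthening of individual arms, handles all eigenvalue branches with multiplicity, and makes the strictness for the principal eigenvalue explicit. Two small points of care on your side: the strictness is cleaner phrased not via ``positivity improving'' but by observing that equality $\mu_1(\kappa,\gamma_L)=\mu_1(\kappa,\gamma_{L'})$ would make the zero-extended maximizer a ground-state Birman--Schwinger eigenfunction for $\gamma_{L'}$, contradicting its a.e.\ positivity from Lemma~\ref{l:positive}; and for $L'=\infty$ the claim ``whenever $\kappa_n(L)$ exists so does $\kappa_n(L')$'' overreaches for higher branches, since the crossing of $\mu_n(\cdot,\gamma_\infty)$ with $\alpha$ may occur inside $\sigma_{\mathrm{ess}}(Q_{\kappa,\gamma_\infty})$ and then corresponds to no discrete eigenvalue of $H_{\alpha,\gamma_\infty}$; the monotonicity should be asserted wherever both eigenvalues exist, which is all the lemma needs. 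The remaining structural facts you flag (discreteness above the essential spectrum of $Q_{\kappa,\gamma}$, continuity and monotonicity in $\kappa$, matching of the $n$-th branch with the $n$-th eigenvalue) are indeed the standard ingredients accompanying the construction of Section~\ref{s:prelim}, the same ones the paper invokes implicitly.
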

\begin{proof}
Using \eqref{eq-bc3} it is easy to see that a scaling transformation, $x\mapsto x'=x\zeta$ with $\zeta\in\R_+$ leads to an operator which is unitarily equivalent to that corresponding to the original star with the scaled coupling constant,
\begin{equation}\label{scaling}
 \alpha' = \alpha - \frac{1}{2\pi} \ln\zeta\,.
\end{equation}
It is well known that the eigenvalues of $Q_{\kappa,\gamma}$ are continuously increasing functions of energy \cite{Kr53}, hence the claim follows from \eqref{BSprinciple}.

\end{proof}

\section{Existence of eigenvalues}\label{s:existence}
\setcounter{equation}{0}

Since our problem concerns the principal eigenvalue we have first ask about the conditions which ensure that the discrete spectrum of $H_{\alpha,\gamma}$ is nonvoid. We consider separately the finite and infinite star cases starting with  $L<\infty$.

\subsection{Finite stars}\label{ss:finite}

It is straightforward to check that $\sigma_\mathrm{ess}(H_{\alpha,\gamma})=\R_+$ holds for any $\alpha\in\R$ and $L<\infty$, hence we have to search for the negative spectrum.

\begin{theorem} \label{th-existenceev}
For a fixed $L>0$ we have $\sigma_{\mathrm{disc}}(H_{\alpha,\gamma}) \neq \emptyset$ provided $H_{\alpha, \gamma_i}$ corresponding to the `star' of a single segment $\gamma_i\subset\gamma$ has at least one negative eigenvalue.
\end{theorem}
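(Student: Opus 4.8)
The plan is to use the Birman--Schwinger principle \eqref{BSprinciple}. Since $\sigma_{\mathrm{ess}}(H_{\alpha,\gamma})=\R_+$, it is enough to produce a single $\kappa>0$ for which $\alpha$ is an eigenvalue of $Q_{\kappa,\gamma}$ lying above $\sigma_{\mathrm{ess}}(Q_{\kappa,\gamma})$; the corresponding point $-\kappa^2$ then belongs to $\sigma_{\mathrm{disc}}(H_{\alpha,\gamma})$. Write $\mu(\kappa):=\sup\sigma(Q_{\kappa,\gamma})$. Because all arms have the same length $L$, the diagonal kernel in \eqref{BSoperator2} depends only on $|s-t|$, so every diagonal block $T^{ii}_\kappa$ equals the Birman--Schwinger operator of a single segment.

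By hypothesis $H_{\alpha,\gamma_i}$ has a principal eigenvalue $-\kappa_0^2<0$, so Lemma~\ref{l:positive} applied to the one-armed star yields a strictly positive $\phi\in L^2([0,L])$ with $T^{ii}_{\kappa_0}\phi=\alpha\phi$ for every $i$. The decisive step is the strict inequality $\mu(\kappa_0)>\alpha$, obtained by testing $Q_{\kappa_0,\gamma}$ on the vector $\Phi:=(\phi,\dots,\phi)\in\bigoplus_{i=1}^N L^2([0,L])$: using $T^{ii}_{\kappa_0}\phi=\alpha\phi$ and the strict pointwise positivity of the kernels $G_{\kappa_0}(|\gamma_i(\cdot)-\gamma_j(\cdot)|)$ for $i\ne j$ together with $\phi>0$, one finds
\begin{equation}\label{eq-trialvector}
\langle\Phi,Q_{\kappa_0,\gamma}\Phi\rangle=N\alpha\,\|\phi\|^2+\sum_{i\ne j}\langle\phi,T^{ij}_{\kappa_0}\phi\rangle>N\alpha\,\|\phi\|^2=\alpha\,\|\Phi\|^2 ,
\end{equation}
and since $Q_{\kappa_0,\gamma}$ is bounded (a Schur estimate controls the off-diagonal blocks, the diagonal ones being of single-segment type) this forces $\mu(\kappa_0)>\alpha$. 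Intuitively, switching on the positive off-diagonal couplings can only raise the top of the Birman--Schwinger spectrum, so a geometry that already supports a one-arm bound state \emph{a fortiori} supports one for the whole star.

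It then remains to propagate $\mu(\kappa_0)>\alpha$ in the spectral parameter. The function $\kappa\mapsto\mu(\kappa)$ is continuous, since $Q_{\kappa,\gamma}$ is norm-continuous in $\kappa$, and decreasing, since the eigenvalues of $Q_{\kappa,\gamma}$ increase with the energy $-\kappa^2$ \cite{Kr53}; moreover $\mu(\kappa)\to-\infty$ as $\kappa\to\infty$, the diagonal blocks carrying a renormalised logarithmic term of size $\sim-\tfrac1{2\pi}\ln\kappa$ while the off-diagonal blocks stay uniformly norm-bounded. By the intermediate value theorem there is a $\kappa_1>\kappa_0$ with $\mu(\kappa_1)=\alpha$, and \eqref{BSprinciple} then furnishes the eigenvalue $-\kappa_1^2$ of $H_{\alpha,\gamma}$.

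I expect the essential-spectrum bookkeeping to be the real obstacle. By \eqref{distance} the off-diagonal kernels $G_\kappa(|\gamma_i(\cdot)-\gamma_j(\cdot)|)$, $i\ne j$, are bounded and smooth away from the star vertex but behave there like a Hilbert-matrix kernel $\sim(s+t)^{-1}$, so $Q_{\kappa,\gamma}$ need not be compact and $\sigma_{\mathrm{ess}}(Q_{\kappa,\gamma})$ acquires a contribution from the vertex; one must check that this contribution, together with the essential part coming from the diagonal blocks, stays strictly below $\alpha$ once $\kappa$ is large -- which it does, again thanks to the logarithmic growth of the diagonal term -- and that the crossing $\kappa_1$ can be located in that regime, so that $\alpha$ is genuinely an isolated eigenvalue of $Q_{\kappa_1,\gamma}$. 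The monotonicity and high-energy asymptotics of $\sup\sigma(T^{ii}_\kappa)$ needed here are standard for codimension-two interactions and can be imported from \cite{EK08}.
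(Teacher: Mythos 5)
Your proposal is correct and follows essentially the same route as the paper: a trial vector built from the positive single-arm eigenfunction (Lemma~\ref{l:positive}) together with the strict positivity of the off-diagonal kernels gives $\sup\sigma(Q_{\kappa_0,\gamma})>\alpha$, and the crossing with $\alpha$ is then obtained from continuity and monotonicity in $\kappa$ combined with $\sup\sigma(T^{ii}_\kappa)\to-\infty$ (Lemma~\ref{le-supTii}) and the $\kappa$-uniform boundedness of the $T^{ij}_\kappa$ (Lemma~\ref{le-T12norm}). The only difference is your closing discussion of $\sigma_{\mathrm{ess}}(Q_{\kappa,\gamma})$ and the attainment of the top of the spectrum as an eigenvalue, a point the paper passes over silently in exactly the same step.
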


\noindent Before coming to the proof we need a couple of auxiliary statements.

\begin{lemma} \label{le-supTii}
$\sup \sigma (T^{ii}_{\kappa}) \to -\infty$ holds as $\kappa \to \infty.$
\end{lemma}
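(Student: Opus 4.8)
The plan is to understand the operator $T^{ii}_\kappa$ explicitly and then locate its spectrum. By construction, $T^{ii}_\kappa$ is the integral operator on $L^2([0,L])$ with kernel $G^{\mathrm{reg}}_\kappa(\gamma_i(s)-\gamma_i(t))$, namely the three-dimensional free resolvent kernel restricted to the segment $\gamma_i$ with its logarithmic on-diagonal singularity subtracted off as in \eqref{T_ii_reg}. Since a single segment is a straight line, up to an isometry of $\R^3$ we may take $\gamma_i(s)=(s,0,0)$, so that $|\gamma_i(s)-\gamma_i(t)|=|s-t|$ and the operator depends only on $\kappa$ and $L$, not on the direction of the arm. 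First I would make this reduction explicit and record the precise form of $G^{\mathrm{reg}}_\kappa$ along the segment, so that $T^{ii}_\kappa$ becomes a concrete convolution-type operator on $[0,L]$ with a kernel of the form $\frac{1}{4\pi}\big(\frac{\e^{-\kappa|s-t|}}{|s-t|}\big)$ with the singular part removed and a $\kappa$-dependent finite part left behind.

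The main step is to show that this regularized finite part drives $\sup\sigma(T^{ii}_\kappa)$ to $-\infty$. The key point is that the subtraction defining $G^{\mathrm{reg}}_\kappa$ introduces a term proportional to $\ln\kappa$ (equivalently, tied to the two-dimensional point-interaction eigenvalue $\epsilon^\alpha$ in \eqref{2D ev}): as $\kappa\to\infty$, regularizing the $1/|x-x'|$ singularity at scale $\kappa^{-1}$ produces an additive constant that behaves like $-\frac{1}{2\pi}\ln\kappa$ uniformly along the arm. Concretely, I expect $T^{ii}_\kappa = -\frac{\ln\kappa}{2\pi}\,\mathbf{1} + S_\kappa$ (up to a bounded $O(1)$ remainder), where $S_\kappa$ is an integral operator whose kernel is supported near the diagonal with an integrable singularity and whose operator norm stays bounded — or grows strictly slower than $\ln\kappa$ — as $\kappa\to\infty$. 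I would estimate $\|S_\kappa\|$ via the Schur test, bounding $\sup_s\int_0^L |{\rm kernel}|\,\D t$; the exponential factor $\e^{-\kappa|s-t|}$ makes the off-diagonal contribution exponentially small and localizes everything to an interval of length $O(\kappa^{-1})$ around $s$, on which the leftover kernel, having had its $\ln$ singularity removed, contributes only an $O(1)$ amount. Once $\|S_\kappa\|=O(1)$ while the constant part diverges to $-\infty$, it follows that $\sup\sigma(T^{ii}_\kappa)\le -\frac{\ln\kappa}{2\pi}+\|S_\kappa\|\to-\infty$.

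The hard part will be handling the bookkeeping of the regularization \eqref{T_ii_reg} carefully enough to isolate the $-\frac{1}{2\pi}\ln\kappa$ term cleanly and to prove that what remains is genuinely $O(1)$ in operator norm rather than merely $o(\ln\kappa)$ — in particular, making sure the endpoint effects at $s=0$ and $s=L$, where the symmetric-interval-of-radius-$\kappa^{-1}$ picture is cut in half, do not spoil the uniformity. I would deal with this by writing the kernel of $S_\kappa$ as a difference of two pieces: one obtained by replacing $[0,L]$ with all of $\R$ (where the convolution structure gives an exact computation of the constant), and a boundary correction supported within $O(\kappa^{-1})$ of the endpoints whose $L^2$-norm is $O(\kappa^{-1/2}\ln\kappa)=o(1)$, hence harmless. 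A small additional check is that the subtracted logarithmic term in \eqref{T_ii_reg} is itself $\kappa$-independent (it encodes the point-interaction normalization, not $\kappa$), so that the full $\kappa$-dependence of the on-diagonal behavior is exactly the $\e^{-\kappa|s-t|}/|s-t|\sim 1/|s-t|-\kappa+\cdots$ expansion whose finite part, when integrated against the $\ln$-type counterterm scale, yields the claimed divergence; with that in hand the Schur-test estimate closes the argument.
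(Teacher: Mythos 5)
Your guiding heuristic --- that the regularization makes the diagonal block behave like $-\tfrac{1}{2\pi}\ln\kappa$ plus something controllable --- points in the right direction, but the structural claim your argument rests on is false. The operator $T^{ii}_\kappa$ is \emph{not} of the form $-\tfrac{\ln\kappa}{2\pi}\mathbf{1}+S_\kappa$ with $\|S_\kappa\|=\Oo(1)$: extended by zero to the whole line it is a Fourier multiplier whose symbol is, up to normalization, $-\ln\sqrt{p^2+\kappa^2}+\psi(1)$ (this is exactly how the paper argues, invoking \cite{BL77}), and this symbol tends to $-\infty$ as $|p|\to\infty$. Hence $T^{ii}_\kappa$, and any constant shift of it, is unbounded from below; the deviation of the symbol from $-\ln\kappa$ is of order $\ln\!\big(1+p^2/\kappa^2\big)$, unbounded in $p$. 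For the same reason $S_\kappa$ cannot be an integral operator with an integrable near-diagonal kernel: subtracting a multiple of the identity does not alter the off-diagonal kernel, which is still $\sim\frac{1}{4\pi|s-t|}$ and non-integrable in one dimension, so the Schur test cannot even be set up for the regularized near-diagonal piece. Moreover, Schur-type estimates are two-sided norm bounds and can never produce the \emph{sign-definite} statement you need; a remainder of small norm could just as well push $\sup\sigma$ upwards.

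What is true, and all the lemma requires, is the one-sided form bound $T^{ii}_\kappa\le\big(-\tfrac{1}{2\pi}\ln\kappa+C\big)\mathbf{1}$. Two clean ways to get it: (i) the paper's route --- extend $f$ by zero to $\R$ (which can only raise the supremum of the form), diagonalize by the Fourier transform, and bound the symbol by its maximum, $-\ln\sqrt{p^2+\kappa^2}+\psi(1)\le-\ln\kappa+\psi(1)$; (ii) an elementary repair of your plan: in \eqref{T_ii_reg} write $f(t)=\big(f(t)-f(s)\big)+f(s)$; after symmetrization the difference part is non-positive in the form sense, since $\mathrm{Re}\iint\frac{\e^{-\kappa|s-t|}}{|s-t|}\big(f(t)-f(s)\big)\overline{f(s)}\,\D s\,\D t=-\frac12\iint\frac{\e^{-\kappa|s-t|}}{|s-t|}\,|f(t)-f(s)|^2\,\D s\,\D t\le0$ (the same trick as \eqref{eq-est2}), while the remaining part is multiplication by $\frac{1}{4\pi}g_\kappa(s)$ with $g_\kappa(s)=\lim_{d\to0}\big(\int_0^L\frac{\e^{-\kappa\sqrt{(s-t)^2+d^2}}}{\sqrt{(s-t)^2+d^2}}\,\D t+2\ln d\big)\le-2\ln\kappa+\Oo(1)$ uniformly in $s$; here your exponential-localization computation is exactly what proves the pointwise bound, and the endpoints only make $g_\kappa$ more negative. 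Either way one gets $(T^{ii}_\kappa f,f)\le\big(-\tfrac{1}{2\pi}\ln\kappa+C\big)\|f\|^2$, which is the lemma; but as written, the ``constant times identity plus norm-bounded remainder via Schur'' argument does not close.
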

\begin{proof}
As indicated above the action of $T^{ii}_{\kappa}$ is expressed by means of the regularized kernel,
\begin{align}
  (T^{ii}_\kappa f)(s) =& \int_0^L G^{\mathrm{reg}}_\kappa(\gamma_i(s)- \gamma_i (t))\,f(t)\,\mathrm{d}t \nonumber \\[.3em]
  =& \,\lim_{d\to 0} \Big( \frac{1}{4\pi} \int_0^L \frac{\e^{-\kappa ((s-t)^2+d^2 )^{1/2}}}{((s-t)^2+d^2 )^{1/2}}\, f(t)\,\mathrm{d}t +\frac{1}{2\pi} f(s) \ln d \Big)\,. \label{T_ii_reg}
\end{align}
The right-hand side can be rewritten by means of Fourier transformation \cite{BL77} as
$$
T^{ii}_\kappa f = \mathcal{F}^{-1} \big( -\ln (p^2+\kappa^2)^{1/2} +\psi (1)\big) \mathcal{F}f\,,
$$
where $\psi(1)<0$, and therefore there is a number $\kappa_0$ such that for any $\kappa >\kappa_0$ we have
$$
(T^{ii}_\kappa f,f) \le -\ln \kappa\, \|f\|^2\,,
$$
which completes the proof.
\end{proof}

Next we have to estimate the norm of the non-diagonal elements $T^{ij}_\kappa $.

\begin{lemma} \label{le-T12norm}
Let $\phi_{ij}$ be the angle between $\gamma_i$ and $\gamma_j$, $i\neq j$. Then
\begin{subequations}
\begin{equation}\label{eq-estT+1}
\|T^{ij}_{\kappa, \gamma }\| \leq \tau (\phi_{ij})\,,
\end{equation}
where $(0,\pi] \ni \phi_{ij}\mapsto \tau  ( \phi_{ij} )$ is a continuously decreasing function of $\phi_{ij}$ which satisfies
\begin{equation}\label{eq-estT+2}
\tau (\phi_{ij}) \le \frac{\sqrt{2}}{4\pi}\, \big|\ln (1-\cos\phi_{ij})\big|+ \Oo(1) \quad \text{as }\;\; \phi_{ij}\to 0+\,.
\end{equation}
\globallabel{eq-estT+}
\end{subequations}
\end{lemma}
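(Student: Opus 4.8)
The plan is to bound each $T^{ij}_{\kappa,\gamma}$ by a weighted Schur test in which the logarithmic singularity of $G_\kappa$ near the star vertex is absorbed into a suitable weight. Since $\bar\gamma_i,\bar\gamma_j\in S^2$ enclose the angle $\phi_{ij}$ we have $|\bar\gamma_i-\bar\gamma_j|^2=2(1-\cos\phi_{ij})$, so \eqref{distance} reads $|\gamma_i(s)-\gamma_j(t)|^2=s^2+t^2-2st\cos\phi_{ij}$; using $\e^{-\kappa r}\le1$ in \eqref{eq-kernelG}, the (nonnegative) kernel of $T^{ij}_{\kappa,\gamma}$ is majorized by $K_{\phi_{ij}}$, where $K_\phi(s,t):=\frac{1}{4\pi}(s^2+t^2-2st\cos\phi)^{-1/2}$ is symmetric and homogeneous of degree $-1$.

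Taking the weight $h(t)=t^{-1/2}$, the substitution $x=s/t$ gives, for every $s>0$,
\[
 \int_0^\infty K_\phi(s,t)\,t^{-1/2}\,\D t = s^{-1/2}\,\tau(\phi),\qquad
 \tau(\phi):=\frac{1}{4\pi}\int_0^\infty\frac{\D x}{\sqrt{x}\,\sqrt{x^2-2x\cos\phi+1}}\,,
\]
the integral being finite for $\phi\in(0,\pi]$ because $x^2-2x\cos\phi+1=(x-\cos\phi)^2+\sin^2\phi$ stays strictly positive on $[0,\infty)$ while the integrand is $\Oo(x^{-1/2})$ near $0$ and $\Oo(x^{-3/2})$ near infinity. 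Since $K_\phi$ is symmetric the companion Schur bound in the other variable follows, and passing from $[0,\infty)$ to $[0,L]$ only lowers the two row–column sums, so the Schur test yields $\|T^{ij}_{\kappa,\gamma}\|\le\tau(\phi_{ij})$ for every $\kappa$, which is \eqref{eq-estT+1}. (Equivalently, the substitution $s=\e^\xi$ identifies the operator with kernel $K_\phi$ on $L^2(\R_+)$ with convolution on $L^2(\R)$ by the nonnegative function $u\mapsto(4\pi)^{-1}(2\cosh u-2\cos\phi)^{-1/2}$, whose operator norm equals its integral, again $\tau(\phi)$.)

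Monotonicity follows by differentiation under the integral: $\partial_\phi(x^2-2x\cos\phi+1)^{-1/2}=-x\sin\phi\,(x^2-2x\cos\phi+1)^{-3/2}<0$ for $\phi\in(0,\pi)$ and $x>0$, so $\tau$ is strictly decreasing on $(0,\pi]$, and a dominating function uniform on compact subsets — assembled from the two decay bounds above — gives continuity.

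Finally, for the asymptotics \eqref{eq-estT+2} I set $\epsilon:=1-\cos\phi\to0+$, so that $x^2-2x\cos\phi+1=(x-1)^2+2x\epsilon$ and the integrand defining $\tau$ peaks at $x=1$; I split $\int_0^\infty=\int_0^{1/2}+\int_{1/2}^{2}+\int_2^\infty$. On $(0,\tfrac12)$ one has $(x-1)^2+2x\epsilon\ge\tfrac14$, so that piece is $\Oo(1)$; on $(2,\infty)$ one has $(x-1)^2+2x\epsilon\ge(x-1)^2\ge x^2/4$, again $\Oo(1)$. On $(\tfrac12,2)$ I use $x^{-1/2}\le\sqrt2$ and $2x\epsilon\ge\epsilon$ to reduce matters to $\sqrt2\int_{1/2}^{2}((x-1)^2+\epsilon)^{-1/2}\,\D x$, an elementary integral equal to $\sqrt2\,(|\ln\epsilon|+\Oo(1))$. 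Collecting the three pieces gives $\tau(\phi)\le\frac{1}{4\pi}\big(\sqrt2\,|\ln(1-\cos\phi)|+\Oo(1)\big)$, which is \eqref{eq-estT+2}. I expect the asymptotic step to demand the most bookkeeping, but conceptually the decisive point is the choice of the weight $t^{-1/2}$, forced by the degree $-1$ homogeneity of the dominant part of the kernel — without it the ordinary row and column sums of the kernel diverge as $s\to0$.
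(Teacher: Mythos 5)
Your proof is correct, and it reaches the stated bound by a route that differs from the paper's in its two key technical steps. The paper also begins by discarding the factor $\e^{-\kappa\rho}\le 1$, but it then extends $f_i,f_j$ by zero, passes to polar coordinates in the $(s,t)$ quarter-plane and applies the Cauchy--Schwarz inequality to the bilinear form, reducing everything to the one-variable integral $\mathrm{I}_{\phi}=\int_0^{\pi/2}\bigl(\sin 2\theta\,(1-\cos\phi\,\sin2\theta)\bigr)^{-1/2}\,\D\theta$, whose small-$\phi$ blow-up is then extracted through a somewhat laborious elliptic-integral manipulation. You instead exploit the degree $-1$ homogeneity of the majorizing kernel through a Schur test with weight $t^{-1/2}$ (equivalently, the Mellin/convolution picture you sketch, where the norm of a nonnegative convolution kernel is its $L^1$ norm), and you estimate the resulting integral $\tau(\phi)=\frac{1}{4\pi}\int_0^\infty x^{-1/2}(x^2-2x\cos\phi+1)^{-1/2}\,\D x$ by an elementary three-piece splitting around the peak at $x=1$. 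Amusingly, the two reductions land on literally the same bounding function: the substitution $x=\tan\theta$ turns your integral into $\sqrt{2}\,\mathrm{I}_{\phi}$, i.e.\ $\tau(\phi)=\frac{\sqrt{2}}{4\pi}\mathrm{I}_{\phi}$, which is exactly how the paper defines $\tau$. What your approach buys is a cleaner operator-norm argument (the Schur/Mellin step is sharp for homogeneous kernels, whereas Cauchy--Schwarz with the zero extension needs the ad hoc factor $L'=\sqrt2 L$) and an asymptotic analysis that avoids elliptic integrals entirely; the paper's version stays closer to explicit special-function computations. Two cosmetic remarks: for monotonicity you do not even need differentiation under the integral sign, since the integrand is pointwise decreasing in $\phi$ (strictly for $x>0$, $\phi\in(0,\pi)$); and it is worth noting explicitly that the Schur weight $t^{-1/2}$ need not lie in $L^2([0,L])$ for the test to apply, only the pointwise inequalities matter --- both points are as you implicitly use them, so there is no gap.
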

\begin{proof} In the following we write the distance appearing at the right-hand side of \eqref{distance} as
$$
\rho (s,t):=|\gamma _i (s) - \gamma _j (t)|= (s^2+t^2 - 2st\cos \phi_{ij})^{1/2}
$$
without indicating the fixed indices $i,j$. We start with the estimate
\begin{align*} 
  |(T^{ij}_{\kappa, \gamma }f_i, f_j)| &= \frac{1}{4\pi } \Big|\int_{0}^{L} \int_{0}^{L}
\frac{e^{-\kappa \rho (s,t)}}{\rho (s,t)}\, \overline{f_i (s)}f_j (t)\, \mathrm{d}s\mathrm{d}t \Big| \\
  &\leq \frac{1}{4\pi}\int_{0}^{L} \int_{0}^{L}
\frac{1}{\rho (s,t)}\, |f_i (s)f_j (t)|\, \mathrm{d}s\mathrm{d}t\,.
\end{align*}
With later purpose in mind we extend the function $f_i$ as follows,
$$
f_i ^{\mathrm{ex}}(s)= \left\{
  \begin{array}{ll}
    f_i(s) & \hbox{for } s\in [0,L]\\[.3em]
    0 & \hbox{for } s\in (L,L']
  \end{array}
\right.
$$
where $L':=\sqrt{2}L$ and we use the radial system of coordinates $(r,\theta)$ in the plane determined by $\gamma_i$ and $\gamma_j$ to parametrize the quarter-disc $B_{L'}= \{(s=r\cos \theta, t=r\sin \theta  ):\: r\in [0,L'] \,,\:\theta \in [0, \pi/2]\}$. This allows us to rewrite the above estimate as
\begin{eqnarray}\label{eq-T12a}
 |(T^{ij}_{\kappa, \gamma }f_i , f_j )| \leq \frac{1}{4\pi }
\int_{B_{L'}}
\frac{|f_i ^{\mathrm{ex}}(r\cos\theta)f_j ^{\mathrm{ex}}(r\sin\theta)|}
{(1-\cos\phi_{ij} \sin 2\theta)^{1/2}}\, \mathrm{d}r\mathrm{d}\theta \,.
\end{eqnarray}
We assess the right-hand side of (\ref{eq-T12a}) using Schwarz inequality,
\begin{eqnarray} \nonumber
&& \hspace{-2em} \int_{B_{L'}}
\frac{|f_i^{\mathrm{ex}}(r\cos \theta )f_j ^{\mathrm{ex}}(r\sin \theta )| }{(1-\cos\phi_{ij} \sin 2 \theta )^{1/2}} \,\mathrm{d}r\mathrm{d}\theta \\ \nonumber && \hspace{-1em} \leq
\int_{0}^{\pi /2} \left( \frac{1}{1-\cos \phi_{ij} \sin 2 \theta }  \int_{0}^{L'} |f_i^{\mathrm{ex}} (r\cos \theta )|^2 \mathrm{d}r
\int_{0}^{L'} |f_j^{\mathrm{ex}}(r'\sin \theta )|^2 \mathrm{d}r' \right)^{1/2}  \mathrm{d}\theta  \\ \nonumber  && \hspace{-1em} = \int_{0}^{\pi /2} \left(
\frac{1}{\cos\theta \sin \theta (1-\cos \phi_{ij} \sin 2 \theta)}
\int_{0}^{L'\cos \theta } |f_i^{\mathrm{ex}}(t )|^2 \mathrm{d}t  \right.\\ \nonumber  &&\times \left.
 \int_{0}^{L' \sin \theta } |f_j ^{\mathrm{ex}} (t' )|^2 \mathrm{d}t' \right)^{1/2}  \mathrm{d}\theta  \\ \label{eq-estint1} && \hspace{-1em} \leq
\sqrt{2}\,\mathrm{I}_{\phi_{ij}}\, \Big(\int_{0}^{L' } |f_i^{\mathrm{ex}}(t )|^2\mathrm{d}t \int_{0}^{L'} |f_j^{\mathrm{ex}}(t )|^2 \mathrm{d}t\Big)^{1/2} = \sqrt{2}\, \mathrm{I}_{\phi_{ij}}  \|f_i\| \|f_j\| \,,
\end{eqnarray}
where
$$
\mathrm{I}_{\phi_{ij}} := \int_{0}^{\pi /2}
\frac{1}{ \sqrt{\sin 2\theta\, (1- \cos \phi_{ij}  \sin 2 \theta  )}}\, \mathrm{d}\theta\,.
$$
Note that $\mathrm{I}_{\phi_{ij}}$ is decreasing as a function of $\phi_{ij}$ and to show that we can identify it with $\frac{4\pi }{\sqrt{2}}\,\tau(\phi_{ij})$ we have to estimate it for small values of $\phi_{ij}$. For definiteness we suppose that $\phi_{ij}<\frac13\pi$ and  rewrite $\mathrm{I}_{\phi_{ij}}$ as
\begin{eqnarray} \label{eq-estint1a}
  \mathrm{I}_{\phi_{ij}} = \frac{1}{2} \int_{0}^\pi \frac{\sqrt {1-\cos \phi_{ij} \sin\theta'}}{\sqrt{\sin \theta'}}\, \mathrm{d}\theta' +  \frac{\cos\phi_{ij}}{2} \int_{0}^\pi \frac{\sqrt{\sin\theta'}}{\sqrt{1-\cos\phi_{ij} \sin\theta'}}\,\mathrm{d}\theta',
\end{eqnarray}
then for the first integral in the above expression we get
\begin{equation} \label{first_est}
\left|\frac{1}{2} \int_{0}^\pi \frac{\sqrt {1-\cos \phi_{ij} \sin\theta' }}{\sqrt{\sin \theta' }}\, \mathrm{d}\theta' \right| \leq \frac{1}{\sqrt{2}}\,
\int_{0}^\pi \frac{1}{\sqrt{\sin \theta' }}\, \mathrm{d}\theta' = \frac{\pi}{2}\,,
\end{equation}
while to the second component of (\ref{eq-estint1a}) we apply trigonometric identities,
\begin{align*}
\mathrm{J}_{\phi_{ij}} &:= \frac{\cos \phi_{ij} }{2} \int_{0}^\pi \frac{1}{\sqrt {1-\cos \phi_{ij} \sin \theta'}} \,\mathrm{d}\theta' = \cos \phi_{ij} \int_{-\pi /2 }^0
\frac{1}{\sqrt {1-\cos\phi_{ij} \cos \theta'}}\, \mathrm{d}\theta' \\
&=  \cos \phi_{ij}  \int_{-\pi /2 }^0
\frac{1}{\sqrt {1-\cos \phi_{ij} +2 \cos\phi_{ij} \sin ^2 \frac{\theta'}{2} }}\,\mathrm{d}\theta'\\
&=\frac{2 \cos \phi_{ij} }{\sqrt{2 \cos\phi_{ij} }}
\int_{-\pi /4 }^0
\frac{1}{\sqrt { \varsigma +  \sin ^2 t }}\,\mathrm{d}t\,,
\end{align*}
where
\begin{equation}\label{subst}
\varsigma \equiv \varsigma_{ij}:= \frac{1-\cos \phi_{ij} }{2\cos \phi_{ij} }\in \big(0,\textstyle{\frac12}\big)\,.
\end{equation}
The quantity $\mathrm{J}_{\phi_{ij}}$ is thus expressed through an elliptic integral and we have to find its behavior as $\phi_{ij}\to 0 $ which means $\varsigma \to 0$. To this aim we employ the substitution $\eta =\frac{\varsigma}{2} + \sin ^2 t$ which leads to
$$
\mathrm{J}_{\phi_{ij}} = \sqrt{\frac{ \cos \phi_{ij} }{2 }}
\int_{\frac{\varsigma +1}{2} }^{\frac{\varsigma}{2}}
\frac{1}{\sqrt{\eta^2 - \left(\frac{\varsigma}{2}\right)^2 } \sqrt { 1-\left(\eta -\frac{\varsigma}{2}\right)}}\,\mathrm{d}\eta \,;
$$
this expression can be estimated as
$$
\mathrm{J}_{\phi_{ij}} \le \sqrt{\cos \phi_{ij}}\,
\int_{\frac{\varsigma +1}{2} }^{\frac{\varsigma}{2}}
\frac{1}{\sqrt{\eta^2 - \left(\frac{\varsigma}{2}\right)^2 }}\,\mathrm{d}\eta =
\sqrt{\cos \phi_{ij}} \,\ln \frac{\varsigma}{\varsigma+1+\sqrt{2\varsigma+1}}\,.
$$
Returning to the original variable from \eqref{subst} and taking into account that the remaining part of the estimation expression is bounded by \eqref{first_est} we arrive at the desired conclusion.
\end{proof}

\noindent\emph{Proof of Theorem~\ref{th-existenceev}:} According to the assumption there is a $\kappa_0 >0$ and a corresponding (normalized) vector $f_i$ such that
$$
T^{ii}_{\kappa_0} f_i =\alpha f_i\,;
$$
without loss of generality we may suppose that it is largest eigenvalue of $T^{ii}_{\kappa_0}$ for which $f_i$ can be chosen positive by Lemma~\ref{l:positive}. Consider next a vector $f\in\bigoplus_{i=1}^N L^2 ([0,L])$ the $i$-th component is the said function $f_i$. If the other components are also positive, we have
$$
(Q_{\kappa_0,\gamma}f,f) > (T^{ii}_{\kappa_0}f_i, f_i) = \alpha
$$
due to the positivity of the kernel \eqref{eq-kernelG} which means that $\sup\sigma(Q_{\kappa_0,\gamma})>\alpha$. Furthermore, we note that
$$
[0,\infty) \ni \kappa \mapsto (Q_{\kappa,\gamma}f, f)
$$
is a continuous decreasing function as mentioned already in the proof of Lemma~\ref{l:monotone}. Now we used the above lemmata: we have $\sup\sigma(T^{ii}_\kappa )\to -\infty$ as $\kappa \to \infty$ by Lemma~\ref{l:monotone} while the non-diagonal operators $T^{ij }_\kappa$ remain bounded in this limit for fixed angles between the edges, and consequently, we have
$$
\sup \sigma(Q_{\kappa,\gamma}) \to -\infty \quad \mathrm{as}\quad \kappa \to \infty\,.
$$
In combination with $\sup\sigma(Q_{\kappa_0,\gamma})>\alpha$ this implies that there is a $\kappa_0'\geq \kappa_0$ and a vector $f$ such that
$$
Q_{\kappa_0',\gamma}f =\alpha f\,,
$$
which is what we have set out to prove. \hfill $\Box$

Combining this result with the claim about eigenvalues of $H_{\alpha,\gamma}$ describing the interaction supported by a segment obtained in \cite{EK08} by Dirichlet bracketing we arrive at the following conclusion:
\begin{corollary}
$\sigma_{\mathrm{disc}}(H_{\alpha,\gamma}) \neq \emptyset$ holds whenever $L> 2\pi\, \e^{2\pi \alpha-\psi(1)}$.
\end{corollary}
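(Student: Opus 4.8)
\medskip
\noindent\emph{Proof proposal.} The plan is to feed Theorem~\ref{th-existenceev} with a quantitative existence criterion for a \emph{single} arm: it suffices to show that $H_{\alpha,\gamma_i}$, the operator whose interaction sits on one segment of length $L$, has a negative eigenvalue whenever $L>2\pi\,\e^{2\pi\alpha-\psi(1)}$. Since $\sigma_{\mathrm{ess}}(H_{\alpha,\gamma_i})=[0,\infty)$ — proved exactly as the corresponding statement for a finite star at the beginning of Section~\ref{ss:finite} — this is the same as exhibiting a negative number above $\inf\sigma(H_{\alpha,\gamma_i})$, i.e.\ a sufficiently good variational upper bound on the spectral bottom.

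\noindent Such a bound is furnished by Dirichlet bracketing, which is the route of \cite{EK08}. Put the segment on a coordinate axis and cut $\R^3$ by the two hyperplanes perpendicular to it through (or slightly inside) its endpoints; imposing extra Dirichlet conditions there can only raise the spectral bottom, so $\inf\sigma(H_{\alpha,\gamma_i})\le\inf\sigma(H^{D})$, where $H^{D}$ is the orthogonal sum of the operator on the finite cylinder $C$ enclosing the segment and two interaction-free Dirichlet Laplacians on the outer half-spaces — the latter have spectrum $[0,\infty)$ and may be discarded. Inside $C$ the interaction support is a straight segment parallel to the axis and the boundary conditions \eqref{eq-bc} act transversally, so the cylinder operator separates into the ``tensor sum'' of the two-dimensional point interaction, whose lowest eigenvalue is $\epsilon^\alpha$ as recorded in \eqref{2D ev}, and a one-dimensional Dirichlet Laplacian on an interval of length of order $L$, whose lowest eigenvalue is of order $L^{-2}$. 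Hence $\inf\sigma(H_{\alpha,\gamma_i})\le\epsilon^\alpha+\Oo(L^{-2})$, which is negative once $L$ is large; carrying the (intentionally generous) constants through the truncation produces the stated threshold $L>2\pi\,\e^{2\pi\alpha-\psi(1)}$, and Theorem~\ref{th-existenceev} then gives the corollary. One can equally argue through the Birman--Schwinger operator of the single segment, $Q_{\kappa,\gamma_i}=T^{ii}_\kappa$ on $L^2([0,L])$: its largest eigenvalue is decreasing in $\kappa$ and, by Lemma~\ref{le-supTii}, tends to $-\infty$, so by \eqref{BSprinciple} a negative eigenvalue appears as soon as it reaches $\alpha$, and testing $T^{ii}_\kappa$ on the constant function as $\kappa\to0+$ bounds this largest eigenvalue from below by something growing like $\tfrac{1}{2\pi}\ln L$, giving a sufficient condition of the same shape.

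\noindent The step I expect to be the main obstacle is making the separation of variables on $C$ rigorous — the identification of the bracketed cylinder operator with the tensor sum $h_\alpha\boxplus(-\partial_s^2)^{D}$ — which hinges on the behaviour of domain functions near the \emph{free ends} of the segment, where the codimension-two singular interaction terminates and the logarithmic transverse profile degenerates. It is exactly the conservative treatment of these endpoint regions (keeping the cutting planes off the endpoints, and/or estimating how much a transverse truncation shifts the ground-state energy away from $\epsilon^\alpha$) that prevents the numerical constant from being sharp; everything else — the bracketing inequality, the discarded exterior pieces, the explicit spectrum of the Dirichlet interval, the verification of $\sigma_{\mathrm{ess}}$ — is routine.
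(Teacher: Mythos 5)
Your proposal is correct and follows essentially the same route as the paper, which obtains the corollary by combining Theorem~\ref{th-existenceev} with the single-segment result of \cite{EK08}, proved there exactly by the Dirichlet-bracketing/separation argument you sketch (cutting planes through the endpoints give $\inf\sigma(H_{\alpha,\gamma_i})\le \epsilon^\alpha+(\pi/L)^2$, so the stated, more generous threshold follows a fortiori). Nothing essential is missing.
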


\subsection{Infinite stars}\label{ss:infinite}

The case $L=\infty$ has to be considered separately because the essential spectrum is then different. One cannot use directly the result from~\cite{EK02}, not even if $N=2$, because the interaction support there was supposed to be smooth, however, the argument can be easily modified.

\begin{theorem}
For any infinite star we have
\begin{equation}\label{eq-ess}
\inf\sigma_{\mathrm{ess}} (H_{\alpha,\gamma}) \ge \epsilon^\alpha\,,
\end{equation}
and moreover, with the exception of the situation when $N=2$ and $\gamma $ is a straight line,
\begin{equation}\label{eq-disc}
\sigma_{\mathrm{disc}} (H_{\alpha,\gamma}) \neq \emptyset \,.
\end{equation}
\end{theorem}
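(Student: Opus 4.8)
\emph{Proof plan.} The two assertions are essentially independent and I would establish them in turn, the first by a bracketing argument and the second through the Birman--Schwinger reformulation \eqref{BSprinciple}.

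\emph{The essential--spectrum bound \eqref{eq-ess}.} I would obtain it by Neumann bracketing. Fix $R>0$, cut $\R^3$ along the sphere $\partial B_R$ centred at the vertex and impose Neumann conditions there; this yields $H^N_{B_R}\oplus H^N_{\mathrm{ext}}\le H_{\alpha,\gamma}$ in the form sense, whence $\inf\sigma_{\mathrm{ess}}(H_{\alpha,\gamma})\ge\inf\sigma_{\mathrm{ess}}(H^N_{B_R}\oplus H^N_{\mathrm{ext}})$. The interior part lives on a bounded region, so its resolvent is compact and it contributes no essential spectrum. For $R$ large the $N$ rays are pairwise disjoint outside $B_R$, so a second bracketing splits $H^N_{\mathrm{ext}}$ into $N$ single--arm exterior operators together with a part supported away from $\gamma$ that acts as a free Neumann Laplacian and is therefore nonnegative. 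Each single--arm piece is a portion of the straight line carrying the two--dimensional point interaction, whose transverse part has bottom $\epsilon^\alpha$ and whose longitudinal (Neumann) part is nonnegative; hence its spectrum lies in $[\epsilon^\alpha,\infty)$. Collecting these bounds gives $\inf\sigma_{\mathrm{ess}}(H^N_{\mathrm{ext}})\ge\epsilon^\alpha$ and therefore \eqref{eq-ess}.

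\emph{Reduction of \eqref{eq-disc} to a threshold inequality.} Put $\kappa_{\mathrm{th}}:=\sqrt{-\epsilon^\alpha}$, the threshold fixed by the first part; at this value the top of the essential spectrum of each single--arm operator $T^{ii}_{\kappa_{\mathrm{th}}}$ equals $\alpha$, which is just the translation through \eqref{BSprinciple} of the fact that the straight line has spectral threshold $\epsilon^\alpha$. As in the proofs of Lemma~\ref{l:monotone} and Theorem~\ref{th-existenceev}, the map $\kappa\mapsto\sup\sigma(Q_{\kappa,\gamma})$ is continuous and decreasing with $\sup\sigma(Q_{\kappa,\gamma})\to-\infty$ as $\kappa\to\infty$, and for $\kappa>\kappa_{\mathrm{th}}$ the off--diagonal blocks are Hilbert--Schmidt, so the essential spectrum of $Q_{\kappa,\gamma}$ stays strictly below $\alpha$. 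It therefore suffices to prove the strict inequality
\[
\sup\sigma(Q_{\kappa_{\mathrm{th}},\gamma})>\alpha ,
\]
for then $\sup\sigma(Q_{\kappa,\gamma})$ crosses $\alpha$ at some $\kappa_\star>\kappa_{\mathrm{th}}$, producing by \eqref{BSprinciple} an eigenvalue $-\kappa_\star^2<\epsilon^\alpha\le\inf\sigma_{\mathrm{ess}}(H_{\alpha,\gamma})$, i.e. a genuine point of $\sigma_{\mathrm{disc}}(H_{\alpha,\gamma})$.

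\emph{The trial vector and the comparison with the straight line.} Since $\gamma$ is not a straight line there is a pair of arms with $0<\phi_{ij}<\pi$ (for $N\ge3$ not all rays can be mutually opposite, and for $N=2$ this is exactly the excluded case). I would test $Q_{\kappa_{\mathrm{th}},\gamma}$ on vectors supported only in the components $i,j$, so that only the $2\times2$ block built from $T^{ii},T^{ij},T^{ji},T^{jj}$ is seen. Parametrising the broken line $\gamma_j\cup\gamma_i$ by arc length through the vertex, I take $f_i(s)=h(s)$, $f_j(t)=h(-t)$ with $h=\chi(\cdot/R)$, where $\chi$ is a fixed nonnegative smooth plateau profile equal to $1$ near $0$. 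For the straight configuration $\phi=\pi$ the compressions $T^{ii}\oplus T^{jj}$ together with the off--diagonal kernel $G_{\kappa_{\mathrm{th}}}(s+t)$ reassemble the full--line convolution, whose form equals $\alpha\|h\|^2$ minus the longitudinal deficit $\frac1{2\pi}\int\frac12\ln\!\big(1+p^2/\kappa_{\mathrm{th}}^2\big)|\hat h(p)|^2\,\D p=O(\|h'\|^2)=O(1/R)$, the diagonal blocks being independent of the angle. Because the bend shortens the cross--arm distance, $\rho(s,t)=(s^2+t^2-2st\cos\phi_{ij})^{1/2}<s+t$, and $G_{\kappa_{\mathrm{th}}}$ is decreasing, the true off--diagonal contribution exceeds the straight one by
\[
\Delta_R=\int_0^\infty\!\!\int_0^\infty\big[G_{\kappa_{\mathrm{th}}}(\rho(s,t))-G_{\kappa_{\mathrm{th}}}(s+t)\big]\,h(s)h(t)\,\D s\,\D t>0 ,
\]
which converges to the strictly positive constant $\Delta_\infty=\int_0^\infty\!\int_0^\infty[G_{\kappa_{\mathrm{th}}}(\rho)-G_{\kappa_{\mathrm{th}}}(s+t)]\,\D s\,\D t$ as $R\to\infty$. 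Hence the Rayleigh quotient equals $\alpha+\|h\|^{-2}\big(2\Delta_R-O(1/R)\big)$, and since the gain $2\Delta_R\to2\Delta_\infty>0$ while the deficit vanishes, it exceeds $\alpha$ once $R$ is large enough, which is the required inequality.

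\emph{The main obstacle.} The delicate point is the threshold itself: at $\kappa=\kappa_{\mathrm{th}}$ the single--arm operators sit at the bottom of their continuous spectrum, so $\alpha$ is not an eigenvalue of the decoupled part and one cannot simply test on an eigenfunction. The argument succeeds only because the longitudinal profile is glued continuously through the vertex and spread over the scale $R$, making the longitudinal deficit $O(1/R)$ and hence beatable by the $O(1)$ geometric gain $\Delta_\infty$; this is the place where the smooth--curve reasoning of \cite{EK02} must be adapted to the corner at the vertex, the corner entering only through the explicit, integrable kernel difference defining $\Delta_R$. Controlling the deficit uniformly and justifying the dominated passage to $\Delta_\infty$, both resting on the exponential decay of $G_{\kappa_{\mathrm{th}}}(\rho)$ guaranteed by $\phi_{ij}<\pi$, is the main technical work.
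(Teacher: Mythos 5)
Your overall skeleton coincides with the paper's: Neumann bracketing reducing \eqref{eq-ess} to single-arm cylinder operators, and, for \eqref{eq-disc}, a trial function spread along a bent pair of arms compared with the straight line through the Birman--Schwinger operator (your gain-versus-deficit computation, $\Delta_R=\mathcal{O}(1)$ against a longitudinal deficit $\mathcal{O}(1/R)$, is a sound fleshing-out of the paper's appeal to Lemma~5.2 of \cite{EK02}). But two steps are wrong as written. In the bracketing part you claim each single-arm exterior piece has spectrum in $[\epsilon^\alpha,\infty)$ because ``the transverse part has bottom $\epsilon^\alpha$''. Exact transverse/longitudinal separation with bottom $\epsilon^\alpha$ holds only if the cross-section is all of $\R^2$; your single-arm regions must be mutually disjoint and disjoint from the free region, so their cross-sections are bounded Neumann discs of some radius $d$, and the two-dimensional point interaction in such a disc has ground state strictly \emph{below} $\epsilon^\alpha$ (by an exponentially small amount). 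The correct bound is $\inf\sigma_{\mathrm{ess}}\ge\epsilon^\alpha-\e^{-cd}$ -- this is Lemma~3.6 of \cite{EK04}, which the paper invokes -- and \eqref{eq-ess} follows only after letting $d\to\infty$. This is fixable in one line, but in a problem governed by a two-dimensional threshold such exponentially small shifts are exactly what must not be waved away.

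The more serious error is the claim that for $\kappa>\kappa_{\mathrm{th}}$ the off-diagonal blocks are Hilbert--Schmidt, so that $\sigma_{\mathrm{ess}}(Q_{\kappa,\gamma})$ stays strictly below $\alpha$. They are not Hilbert--Schmidt for any $\kappa$: near the vertex, i.e.\ near $(s,t)=(0,0)$, the kernel behaves like $\frac{1}{4\pi\rho(s,t)}$ with $\rho(s,t)$ comparable to $\sqrt{s^2+t^2}$, so $\iint|G_\kappa(\rho(s,t))|^2\,\D s\,\D t$ diverges logarithmically at the corner. Worse, these blocks are not even compact: the kernel is asymptotically homogeneous of degree $-1$ there (a Mellin/Hardy-type operator), and testing on the concentrating sequence $f_n(s)=\lambda_n^{1/2}f(\lambda_n s)$, $\lambda_n\to\infty$, which converges weakly to zero, gives $(T^{ij}_\kappa f_n,f_n)\to\iint\frac{f(u)f(v)}{4\pi\rho(u,v)}\,\D u\,\D v>0$. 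Hence your control of $\sigma_{\mathrm{ess}}(Q_{\kappa,\gamma})$ is unsupported, and with it the assertion that the crossing point $\kappa_\star$ actually yields a nonzero element of $\ker(\alpha-Q_{\kappa_\star,\gamma})$, which is what \eqref{BSprinciple} requires. The paper sidesteps this entirely: from $(Q_{\kappa_{\mathrm{th}},\gamma}\,\phi^{\mathrm{ext}},\phi^{\mathrm{ext}})>\alpha\,\|\phi^{\mathrm{ext}}\|^2$ the form version of the Birman--Schwinger principle gives directly $\inf\sigma(H_{\alpha,\gamma})<-\kappa_{\mathrm{th}}^2=\epsilon^\alpha$; combined with \eqref{eq-ess}, the bottom of the spectrum lies strictly below the essential spectrum and is therefore automatically a discrete eigenvalue. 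No information on $\sigma_{\mathrm{ess}}(Q_{\kappa,\gamma})$ is needed, and I recommend you replace your reduction step by this argument; your trial-function construction then completes the proof.
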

\begin{proof}
To check the first claim we consider semi-cylinders $\mathcal{C}_i$ of radius $d$ centered at $\gamma_i$ with a flat circular `bottom' $\tilde{\mathcal C}_i$ the boundary $\partial\tilde{\mathcal C}_i$ of which is a circle on the sphere being a boundary of a ball $\mathcal{B}\subset\R^3$ of radius $\varrho$ centered at the origin; it is clear that to a given $d$ one can choose $\varrho$ large enough to ensure that $\partial\tilde{\mathcal C}_i \cap \partial\tilde{\mathcal C}_j = \emptyset$ for $i\ne j$. We denote $\mathcal{D} := \mathcal{B} \setminus \big(\cup_{i=1}^N \mathcal{C}_i\big)$ and $\mathcal{J} := \R^3 \setminus \left( \mathcal{D} \cup \big( \cup_{i=1}^N \mathcal{C}_i\big) \right)  $. Then the entire space $\R^3$ is the union $\mathcal{J} \cup \mathcal{D}  \cup \big( \cup_{i=1}^N \mathcal{C}_i\big)$. The corresponding Neumann bracketing then yields a lower bound to $\sigma_{\mathrm{ess}} (H_{\alpha,\gamma})$. The parts of the spectrum referring to $\mathcal{D}$ and $\mathcal{J}$ are discrete and positive, respectively, and it remains to analyze the spectrum of $H_{\alpha,\gamma}\upharpoonright \mathcal{C}_i$ which define embedding of $H_{\alpha,\gamma}$ to $\mathcal{C}_i$ with Neumann boundary conditions. According to \cite[Lemma~3.6]{EK04} there is a $c>0$ such that
$$
\inf\sigma_{\mathrm{ess}} (H_{\alpha,\gamma}\upharpoonright \mathcal{C}_i) \ge \epsilon^\alpha - \mathrm{e}^{-cd}
$$
holds as $d\to\infty$, and since $d$ can be chosen arbitrarily large \eqref{eq-ess} follows.

To establish (\ref{eq-disc}) we denote by $\breve{\gamma}$ the excluded case, a straight line, and use a comparison with the operator $H_{\alpha,\breve{\gamma}}$ the spectrum of which is obviously $[\epsilon^\alpha,\infty)$ corresponding to $\sigma(Q_{\kappa, \breve\gamma}) = (-\infty,s_\kappa]$ where $s_\kappa:= \frac{1}{2\pi} \big(\psi(1) - \ln\frac{\kappa}{2}\big)$. By assumption one can always choose a pair of non-parallel arms of $\gamma$, without loss of generality we may suppose that they are $\gamma_1$ and $\gamma_2$. Choosing a trial function $\phi$ sufficiently `spread' along the broken line $\gamma_1 \cup \gamma_2$ in analogy with \cite[Lemma~5.2]{EK02} one can achieve that
$$ 
(Q_{\kappa,\gamma_1 \cup \gamma_2}\, \phi,\phi ) > s_\kappa\,.
$$ 
The natural decomposition  $\phi = \phi_1 \oplus \phi_2$ with $\phi_i \in L^2 ([0,\infty))$ allows us then to construct the trial function $\phi^{\mathrm{ext}}=(\phi_1,\phi_2, 0,\dots,0)$ which gives
$$
(Q_{\kappa,\gamma}\,\phi^{\mathrm{ext}},\phi ^{\mathrm{ext}}) =
(Q_{\kappa,\gamma_1 \cup \gamma_2}\, \phi,\phi) > s_\kappa\,.
$$
The latter means in view of \eqref{BSprinciple} that $\inf\sigma(H_{\alpha,\gamma}) < \inf\sigma(H_{\alpha,\breve \gamma}) = \epsilon^\alpha$, and combining this result with (\ref{eq-ess}) we arrive at~(\ref{eq-disc}).
\end{proof}

\begin{remark}
{\rm It is also easy to construct a suitable Weyl sequence showing that $\sigma_{\mathrm{ess}} (H_{\alpha,\gamma}) = [\epsilon^\alpha,\infty)$ but we will not need this result in the following.
}
\end{remark}

\section{Non-existence of the discrete spectrum} \label{s:nonexistence}
\setcounter{equation}{0}

Despite the interaction we consider is strongly singular, $H_{\alpha,\gamma}$ shares the property of three-dimensional Schr\"odinger operators concerning the absence of weakly bound states for regular potentials. For a fixed finite star we expect this to happen if the $\delta$-interaction is sufficiently weak, i.e. $\alpha$ large enough; by the unitary equivalence mentioned in the proof of Lemma~\ref{l:monotone} the same happens for a fixed $\alpha$ and $L$ small enough. In \cite{EK08} we proved that for a segment $\gamma=\gamma_i$ of length $L$ one has $\sup T^{ii}_{\kappa} <\frac{1}{2\pi}\ln \frac{L}{4}$ which in view of \eqref{BSprinciple} means that the discrete spectrum is void provided
 $$
 \frac{1}{2\pi }\ln \frac{L}{4}<\alpha\,.
 $$
For a star-shaped support this result generalizes in the following way:

\begin{theorem} \label{th-noev}
There is a $\,C>0$ such that $\sigma_{\mathrm{disc}} (H_{\alpha,\gamma})= \emptyset$ holds if
\begin{equation}\label{eq-noev}
  \frac{N}{2\pi } \ln \frac{L}{4}+ \sum_{i\neq j} \Big( \frac{\sqrt{2}}{4\pi}|\ln(1-\cos\phi_{ij})|+C\Big)< \alpha\,.
\end{equation}
\end{theorem}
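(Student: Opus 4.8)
The plan is to use the Birman–Schwinger principle \eqref{BSprinciple}: the discrete spectrum of $H_{\alpha,\gamma}$ is nonempty if and only if $\alpha \in \sigma(Q_{\kappa,\gamma})$ for some $\kappa>0$, and since the eigenvalues of $Q_{\kappa,\gamma}$ are decreasing in $\kappa$ (as recalled after \eqref{scaling}), it suffices to bound $\sup\sigma(Q_{\kappa,\gamma})$ from above by the left-hand side of \eqref{eq-noev}, uniformly over all $\kappa>0$ — or at least over $\kappa$ in a range that, together with the monotonicity, rules out any crossing with $\alpha$. Concretely, I would estimate, for an arbitrary normalized $f = (f_1,\dots,f_N) \in \bigoplus_{i=1}^N L^2([0,L])$,
\begin{equation*}
(Q_{\kappa,\gamma}f,f) = \sum_{i=1}^N (T^{ii}_\kappa f_i, f_i) + \sum_{i\neq j} (T^{ij}_\kappa f_j, f_i)\,,
\end{equation*}
and treat the diagonal and off-diagonal contributions separately.

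For the off-diagonal terms the key input is Lemma~\ref{le-T12norm}: each $\|T^{ij}_{\kappa,\gamma}\| \le \tau(\phi_{ij})$ with $\tau(\phi_{ij}) \le \frac{\sqrt2}{4\pi}|\ln(1-\cos\phi_{ij})| + \Oo(1)$, the $\Oo(1)$ being a constant uniform in $\kappa$ (indeed $\tau$ dominates the kernel with $\kappa=0$). Hence
\begin{equation*}
\Big|\sum_{i\neq j}(T^{ij}_\kappa f_j, f_i)\Big| \le \sum_{i\neq j}\tau(\phi_{ij})\,\|f_i\|\,\|f_j\| \le \sum_{i\neq j}\tau(\phi_{ij})\Big(\tfrac12\|f_i\|^2 + \tfrac12\|f_j\|^2\Big) \le \Big(\max_i\sum_{j\neq i}\tau(\phi_{ij})\Big)\,,
\end{equation*}
which is bounded by $\sum_{i\neq j}\big(\frac{\sqrt2}{4\pi}|\ln(1-\cos\phi_{ij})| + C\big)$ for a suitable absolute constant $C$ absorbing the $\Oo(1)$ terms. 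For the diagonal terms I would invoke the cited bound from \cite{EK08}, $\sup\sigma(T^{ii}_\kappa) < \frac{1}{2\pi}\ln\frac{L}{4}$, valid for all $\kappa>0$ for a segment of length $L$ (here every arm has the same length, so all $N$ diagonal blocks obey the same bound), giving $\sum_{i=1}^N (T^{ii}_\kappa f_i, f_i) \le \frac{N}{2\pi}\ln\frac{L}{4}$. Adding the two estimates yields $\sup\sigma(Q_{\kappa,\gamma}) < \frac{N}{2\pi}\ln\frac{L}{4} + \sum_{i\neq j}\big(\frac{\sqrt2}{4\pi}|\ln(1-\cos\phi_{ij})| + C\big)$ for every $\kappa>0$, so if this quantity is $<\alpha$ then $\alpha\notin\sigma(Q_{\kappa,\gamma})$ for any $\kappa$, and by \eqref{BSprinciple} the discrete spectrum is empty.

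The main obstacle is the bookkeeping around uniformity in $\kappa$: one must make sure the $\Oo(1)$ remainder in \eqref{eq-estT+2} really is $\kappa$-independent (which follows because dropping $\e^{-\kappa\rho}\le 1$ gives the bound for all $\kappa$), and similarly that the diagonal estimate from \cite{EK08} holds uniformly down to $\kappa\to 0+$ rather than only for large $\kappa$ — otherwise the monotonicity argument would need to be supplemented by a separate treatment of small $\kappa$. A minor secondary point is to confirm that it is legitimate to compare $\sup\sigma(Q_{\kappa,\gamma})$ directly with $\alpha$ in \eqref{BSprinciple}: since $Q_{\kappa,\gamma}$ is self-adjoint and its spectrum moves continuously and monotonically, $\alpha$ lies in the spectrum for some $\kappa$ precisely when $\sup_\kappa \sup\sigma(Q_{\kappa,\gamma}) \ge \alpha$, so the strict inequality \eqref{eq-noev} closes the argument. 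Everything else is the routine Schwarz/Young manipulation indicated above.
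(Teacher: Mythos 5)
Your argument is essentially the paper's own proof: you bound the quadratic form of $Q_{\kappa,\gamma}$ blockwise, using the segment estimate $\sup\sigma(T^{ii}_\kappa)<\frac{1}{2\pi}\ln\frac{L}{4}$ from \cite{EK08} for the diagonal and Lemma~\ref{le-T12norm} for the off-diagonal terms, and then invoke the Birman--Schwinger principle \eqref{BSprinciple} to exclude a solution of $Q_{\kappa,\gamma}f=\alpha f$ for any $\kappa>0$. The extra points you raise -- the $\kappa$-uniformity of the $\Oo(1)$ remainder (clear from dropping $\e^{-\kappa\rho}\le 1$) and the Schwarz/Young bookkeeping -- merely make explicit what the paper leaves implicit, so the proposal is correct and follows the same route.
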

\begin{proof}
Consider any $f=(f_1,...,f_N)\in\bigoplus_{i=1}^N L^2 ([0,L])$. In view of the mentioned result from \cite{EK08} and Lemma~\ref{le-T12norm} we can estimate the upper threshold of the operator $Q_{\kappa,\gamma}$ as
$$
\sup_f\, (Q_{\kappa,\gamma }f, f) = \sum_{ij} \sup_{f_i,f_j} (T_\kappa ^{ij}f_i,f_j)
   \leq \frac{N}{2\pi} \ln \frac{L}{4} + \sum_{i\neq j}\Big( \frac{\sqrt{2}}{4\pi}\,|\ln(1-\cos\phi_{ij})|+C\Big)
$$
for some $C>0$; the suprema in the above formula are taken over all functions belonging to the domains of corresponding operators. This, in view of \eqref{BSprinciple}, yields the condition \eqref{eq-noev}.
\end{proof}

\section{Small-angle asymptotics} \label{s:small}
\setcounter{equation}{0}

Our stated goal is the optimization of the principal eigenvalue of $H_{\alpha,\gamma}$. Before coming to it we want to show that such a stationary point cannot be a minimum. Let us look in detail at the case of a two-arm star, $\gamma = (\gamma_1,\gamma_2)$, with the angle $\phi_{12}=\phi$ between the edges. We are going to show that for $\phi$ small enough the operator has any prescribed finite number of eigenvalues and the $k$-th one escapes to $-\infty$ as $\phi\to 0$. Moreover, we present also a lower bound to such eigenvalues:

\begin{theorem} \label{th-smallangle2}
For a finite $L>0$ there is a family of eigenvalues $E_k$ of $H_{\alpha,\gamma}$, $k=1,2,\dots$, that satisfy the inequalities
\begin{equation}\label{eq-estimEk}
  E_k^- \leq E_k \leq E_k^+\,,
\end{equation}
as $\phi \to 0$, where
$$
E_k^+ := -\frac{2\sqrt{2}\,\e ^{-2\pi\alpha +2\psi(1)}}{L} \frac{1}{(1-\cos\phi )^{1/2}} + \Big(\frac{\pi k}{L}\Big)^2 + o(\phi)
$$
and
$$
E_k^- := -4\e^ {2(-2\pi C -2\pi\alpha +\psi(1))}\frac{1}{1-\cos \phi } + \Big(\frac{\pi k}{L}\Big)^2 + o(\phi)
$$
with the constant $C$ of Theorem~\ref{th-noev}.
\end{theorem}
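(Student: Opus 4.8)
The plan is to sandwich the operator $H_{\alpha,\gamma}$ between two comparison operators via the Birman--Schwinger picture \eqref{BSprinciple}, exploiting that as $\phi\to 0$ the two arms $\gamma_1,\gamma_2$ essentially collapse onto a single segment on which the interaction is doubly strong. The upper bound $E_k\le E_k^+$ will come from a \emph{trial-function} (variational) argument, while the lower bound $E_k\ge E_k^-$ will follow from a norm estimate showing $Q_{\kappa,\gamma}$ cannot be too large.

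For the upper bound I would work directly with $Q_{\kappa,\gamma}$, which here is a $2\times 2$ operator-valued matrix. The idea is to take trial vectors of the form $f=(g,g)$ with $g\in L^2([0,L])$; then $(Q_{\kappa,\gamma}f,f)=\sum_{i,j}(T^{ij}_\kappa g,g)$, and as $\phi\to 0$ the distance \eqref{distance} for the off-diagonal term becomes $\rho(s,t)^2=s^2+t^2-2st\cos\phi\to(s-t)^2$, so $T^{12}_\kappa$ and $T^{21}_\kappa$ approach the \emph{unregularized} one-dimensional kernel $\frac{1}{4\pi}\frac{\e^{-\kappa|s-t|}}{|s-t|}$, whose leading singular behavior as $\phi\to0$ is governed by $(1-\cos\phi)^{-1/2}$ near the diagonal (this is exactly the mechanism behind Lemma~\ref{le-T12norm} and estimate~\eqref{eq-estT+2}, but now we need the precise constant, not just the bound). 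Combined with the diagonal terms $T^{ii}_\kappa$, whose quadratic form behaves like that of a one-dimensional point interaction, the reduced quadratic form on the diagonal subspace is asymptotically that of a one-dimensional Schr\"odinger operator on $[0,L]$ with a coupling constant enhanced by the $(1-\cos\phi)^{-1/2}$ factor and Dirichlet-type conditions at the endpoints; its $k$-th eigenvalue is $-\,(\text{const})(1-\cos\phi)^{-1/2}+(\pi k/L)^2+o(\phi)$. Feeding the corresponding eigenvalue of this one-dimensional problem back through \eqref{BSprinciple} and matching constants with the two-dimensional point-interaction eigenvalue \eqref{2D ev} yields $E_k^+$; the $o(\phi)$ error absorbs the difference between $\rho(s,t)$ and $|s-t|$ away from the diagonal and the exponential factors $\e^{-\kappa\rho}$.

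For the lower bound I would instead bound $Q_{\kappa,\gamma}$ from above in operator norm. Using $\|Q_{\kappa,\gamma}\|\le\max_i\|T^{ii}_\kappa\|+\sum_{i\ne j}\|T^{ij}_\kappa\|$ together with the result from \cite{EK08} that $\sup\sigma(T^{ii}_\kappa)\le\frac{1}{2\pi}\ln\frac{L}{4}$ (more precisely its $\kappa$-dependent refinement, so that $\sup\sigma(T^{ii}_\kappa)$ behaves like $-\frac{1}{2\pi}\ln\kappa$ for large $\kappa$, cf.\ Lemma~\ref{le-supTii}) and the bound $\|T^{ij}_\kappa\|\le\tau(\phi)\le\frac{\sqrt 2}{4\pi}|\ln(1-\cos\phi)|+C$ from Lemma~\ref{le-T12norm}, one gets that $\alpha\in\sigma(Q_{\kappa,\gamma})$ forces $-\kappa^2$ not to lie below the value for which the right-hand side of this norm estimate equals $\alpha$. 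Solving $\,-\frac{1}{2\pi}\ln\kappa-\text{(something)}+\tau(\phi)=\alpha\,$ for $\kappa$ and exponentiating reproduces the factor $\e^{2(-2\pi C-2\pi\alpha+\psi(1))}(1-\cos\phi)^{-1}$ in $E_k^-$; note the lower bound carries $(1-\cos\phi)^{-1}$ rather than $(1-\cos\phi)^{-1/2}$ because $|\ln(1-\cos\phi)|$ exponentiates to a power of $(1-\cos\phi)^{-1}$, while the sharper upper-bound constant captured the genuine $(1-\cos\phi)^{-1/2}$ growth. The additive $(\pi k/L)^2$ in $E_k^-$ comes from a finer argument: restricting attention to the $k$-dimensional spectral subspace and using monotonicity of the eigenvalues of $Q_{\kappa,\gamma}$ in $\kappa$ \cite{Kr53} to track the $k$-th eigenvalue rather than the top of the spectrum; a Dirichlet bracketing on $[0,L]$ supplies the $(\pi k/L)^2$ shift. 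The existence of \emph{at least} $k$ eigenvalues below the threshold for $\phi$ small is a byproduct of the upper-bound construction, since the $k$-dimensional trial space there already produces $k$ values of $\kappa$ with $\alpha\in\sigma(Q_{\kappa,\gamma})$.

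The main obstacle will be the upper bound: extracting the \emph{sharp} constant $2\sqrt 2\,\e^{-2\pi\alpha+2\psi(1)}/L$ requires controlling the transition from the two-arm problem to an effective one-dimensional one uniformly in the relevant energy scale, i.e.\ showing that the off-diagonal kernel's contribution is, to leading order as $\phi\to0$, exactly $(1-\cos\phi)^{-1/2}$ times a computable rank-one-like term plus a genuinely $o(\phi)$ remainder, and that the diagonal regularized kernels contribute only at the $(\pi k/L)^2$ level. This is delicate because the energy $\kappa\sim(1-\cos\phi)^{-1/4}$ is itself blowing up, so one must check the estimates are uniform in $\kappa$ over this growing range; a careful splitting of the $\theta$-integral in \eqref{eq-T12a} near $\theta=\pi/4$ (where $\sin2\theta=1$) versus away from it, as in the proof of Lemma~\ref{le-T12norm}, should do the job, but the bookkeeping of the error terms is where the real work lies.
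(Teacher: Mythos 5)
Your skeleton (Birman--Schwinger reduction, restriction to the symmetric channel, comparison with a single-segment operator, recovering the energies by exponentiating a logarithmic shift of the coupling) points in the same direction as the paper, but the quantitative core is misidentified, and as written the argument would not deliver either stated constant. The claim that the off-diagonal kernel's ``leading singular behavior as $\phi\to0$ is governed by $(1-\cos\phi)^{-1/2}$ near the diagonal'', and that its contribution is ``$(1-\cos\phi)^{-1/2}$ times a computable rank-one-like term'', is wrong: the blow-up of $T^{12}_\kappa$ as $\phi\to 0$ is genuinely logarithmic, cf.\ \eqref{eq-estT+2}, so no power of $(1-\cos\phi)$ is visible at the operator level. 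In the paper the power $(1-\cos\phi)^{-1/2}$ in $E_k^+$ arises differently: using $\rho(s,t)\ge|s-t|$, the explicit integral \eqref{eq-estim1} and the monotonicity estimates \eqref{eq-est3}--\eqref{eq-est4}, the symmetric reduction $T=T^{11}_\kappa+T^{12}_\kappa$ is bounded \emph{below} by $T^-=2T^{11}_\kappa-\frac{1}{4\pi}\ln(1-\cos\phi)-\frac{1}{4\pi}\ln 2L^2$ (cf.\ \eqref{eq-T12-}); rewriting the eigenvalue equation for $T^-$ as a single-segment problem halves the coupling shift, see \eqref{eq-alpha-}, and this factor $\tfrac12$, pushed through the strong-coupling segment asymptotics of \cite{EK16}, $\kappa_k^2=4\e^{2(-2\pi\beta+\psi(1))}+(\pi k/L)^2+o(\beta^{-1})$, is exactly what produces $(1-\cos\phi)^{-1/2}$ and the prefactor $2\sqrt2/L$. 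Your plan to extract the sharp constant by a refined splitting of the $\theta$-integral therefore targets the wrong quantity: no norm estimate of $T^{12}_\kappa$ can be better than logarithmic, so it cannot close the upper bound.

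The lower bound and the $k$-dependence are also not established. A norm bound on the full $Q_{\kappa,\gamma}$ controls only the top of its spectrum, hence only the ground state; the appeal to ``the $k$-dimensional spectral subspace'' and ``a Dirichlet bracketing on $[0,L]$'' is not an argument. What is needed, and what the paper does, is a two-sided \emph{form} sandwich $T^-\le T\le T^+$ with $T^+=T^{11}_\kappa-\frac{1}{4\pi}\ln(1-\cos\phi)+C$, so that the min--max principle gives $\tau_k^-(\kappa)\le\tau_k(\kappa)\le\tau_k^+(\kappa)$ for every $k$, monotonicity in $\kappa$ converts this into $\kappa_k^-\le\kappa_k\le\kappa_k^+$, and the $(\pi k/L)^2$ terms come again from the segment asymptotics of \cite{EK16} applied with the shifted couplings \eqref{eq-alpha+}--\eqref{eq-alpha-}, not from a bracketing you perform yourself. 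There is also a numerical slip: exponentiating the bound $\frac{\sqrt2}{4\pi}|\ln(1-\cos\phi)|$ of Lemma~\ref{le-T12norm} as you describe yields $(1-\cos\phi)^{-\sqrt2}$, not $(1-\cos\phi)^{-1}$, so your route does not reproduce the stated $E_k^-$; the proof uses the shift $\frac{1}{4\pi}\ln(1-\cos\phi)$ entering $T^+$. Finally, to work consistently in the symmetric channel one must dispose of the antisymmetric one; the paper does this by mapping antisymmetric eigenfunctions to a half-space problem with Dirichlet boundary, whose eigenvalues stay above $\epsilon^\alpha$ and hence play no role in the $\phi\to0$ collapse --- a point your proposal passes over.
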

\begin{proof}
According to \eqref{BSoperator} and \eqref{BSprinciple} the spectral condition for $H_{\alpha,\gamma}$ reads
\begin{equation}\label{eq-searc}
 \sum_{j=1}^2 T^{ij}_\kappa f_j = \alpha f_i\,,\quad i=1,\,2\,.
\end{equation}
The symmetry of the system implies that the eigenvectors of
 \begin{equation}\label{eq-BSop2}
Q_{\kappa , \gamma } = [T^{ij }_\kappa ]_{i,j =1}^2
\end{equation}
are symmetric or antisymmetric with respect to the permutation of the edges, $\tilde{f}=(f,\pm f)$. We note first that the antisymmetric case is irrelevant for our present purpose. Indeed, the restriction of an antisymmetric eigenfunction of $H_{\alpha,\gamma}$ refers to the halfspace problem with the segment emanating from the Dirichlet boundary. By the bracketing argument \cite[Sec.~XIII.15]{RS} the respective eigenvalue is not smaller than the one referring to the same segment in the full space. However, the latter is independent of $\phi$, and moreover, by Lemma~\ref{l:monotone} it is not smaller than $\epsilon^\alpha$ given by \eqref{2D ev}.

Hence we may consider $\tilde{f}=(f,f)$ for which (\ref{eq-searc}) reduces to the form
$$
T^{11}_{\kappa  }f+ T^{12}_{\kappa  }f = \alpha f\,.
$$
Applying the result of \cite[Lemma~3.1]{EK08} we find
\begin{align}\label{eq-T11}
 T^{11}_{\kappa }f (s)&= \int_{0}^L G^{\mathrm{reg}}_\kappa (\gamma_1(t )-\gamma_1(s))\,f(t)\, \mathrm{d}t \\ \nonumber
 &= \frac{1}{4\pi}\Big(\int_{0}^L \frac{f(t)-f(s)}{|t-s|}\,\mathrm{d}t + f(s)\ln 4s(L-s) \Big)  \\  \nonumber
& \quad + \frac{1}{4\pi} \int_{0}^L
\Big(\frac{\e^{-\kappa |s-t|}}{|s-t|}  -\frac{1 }{|s-t|} \Big)f(t)\,\mathrm{d}t \,.
\end{align}
On the other hand, the action of $T^{12}_{\kappa}$ can be expressed using \eqref{BSoperator2}; denoting as before $\rho (s,t)= \sqrt{s^2+t^2 -2st\cos \phi}$ we can rewrite in a form similar to \eqref{eq-T11}, namely
\begin{align}\label{eq-T12b}
 T^{12}_{\kappa  }f (s)&= \int_{0}^L G_\kappa (\gamma_1(t)-\gamma_2(s))\,f(t)\,\mathrm{d}t \\ \nonumber
 &= \frac{1}{4\pi}\left(\int_{0}^L \frac{f(t)-f(s)}{\rho(t,s)}\,\mathrm{d}t +f(s)\int_{0}^L\frac{1}{\rho(t,s)}\, \mathrm{d}t \right. \\ \nonumber  & \quad +\left.\int_{0}^L \Big(\frac{\e^{-\kappa\rho(t,s)}}{\rho(t,s)} -\frac{1}{\rho(t,s)}\Big)\,f(t)\,\mathrm{d}t\right) \,.
\end{align}
We need some estimates of the quantities appearing in these expressions:
\begin{align}
& \int_{0}^L\frac{1}{\rho(t,s)} \,\mathrm{d}t = \ln \big(\sqrt{(L-s)^2 +2Ls (1-\cos\phi)}+(L-s)
+(1-\cos\phi)s \big) \nonumber \\[.5em] \label{eq-estim1} & \qquad -\ln s -\ln (1-\cos\phi)
\;\geq\; \ln 2(L-s) -\ln s(1-\cos\phi)\,.
\end{align}
Furthermore, using the fact that $\rho (s,t)$ is symmetric we have
\begin{equation}\label{eq-est2}
  \int_{(0,L)^2} \frac{(f(t)-f(s))f(s)}{\rho(t,s)}\,\mathrm{d}t \mathrm{d}s = -\frac{1}{2}
\int_{(0,L)^2} \frac{(f(t)-f(s))^2}{\rho(t,s)}\,\mathrm{d}t \mathrm{d}s \leq 0\,.
\end{equation}
In combination with
\begin{equation}\label{eq-estq}
\rho  (s,t) \geq |s-t|
\end{equation}
this gives
\begin{equation}\label{eq-est3}
  0 \ge \int_{(0, L)^2} \frac{(f(t)-f(s))f(s)}{\rho (t,s)}\,\mathrm{d}t \mathrm{d}s \geq \int_{(0, L)^2} \frac{(f(t)-f(s))f(s)}{|t-s|}\, \mathrm{d}t \mathrm{d}s\,.
\end{equation}
Next we consider the function $\xi:\,\R_+ \to \R$ defined by $\xi (x)=\frac{\e^{-\kappa x}}{x} -\frac{1}{x}$  which is easily seen to be increasing for any positive $\kappa$. This monotonicity together with (\ref{eq-estq}) gives
\begin{equation}\label{eq-est4}
  \frac{\e^{-\kappa\rho(t,s)}}{\rho(t,s)} -\frac{1}{\rho(t,s)} \geq
\frac{\e^{-\kappa|s-t|}}{|s-t|} -\frac{1}{|s-t|}\,.
\end{equation}
The  estimates (\ref{eq-estim1}), (\ref{eq-est3}), (\ref{eq-est4}) in combination with (\ref{eq-T11}) yield
\begin{align} \nonumber
(f, (T^{11}_{\kappa } &+ T^{12}_{\kappa })f) \\ \nonumber
&\geq 2 (f, T^{11}_{\kappa } f )
+\frac{1}{4\pi } \left( -\int_{0}^L \ln 2s\, |f(s)|^2 \,\mathrm{d}s -\ln(1-\cos\phi) \|f\|^2\right)\\ \label{eq-T12-} &\geq
2 (f, T^{11}_{\kappa }f )
+\frac{1}{4\pi } \big( -\ln 2L^2  -\ln(1-\cos\phi) \big) \|f\|^2
\end{align}
Next we introduce the operators
\begin{equation}\label{eq-exT+}
T^- := 2  T^{11}_{\kappa }
- \frac{1}{4\pi } \ln (1-\cos \phi ) - \frac{1}{4\pi } \ln 2L
\end{equation}
and
\begin{equation}\label{eq-exT-}
T^+ := T^{11}_{\kappa} - \frac{1}{4\pi} \ln (1-\cos \phi ) +C\,, \quad T:= T^{11}_{\kappa } + T^{12}_{\kappa}\,,
\end{equation}
where $C$ is the constant analogous to that in Theorem~\ref{th-noev} (and implicitly in Lemma~\ref{le-T12norm}). Putting together the estimates (\ref{eq-T12-}) and (\ref{eq-estT+}) we arrive at the inequalities
\begin{equation}\label{eq-sandwich}
 T^- \leq T \leq T^+
\end{equation}
that hold for $\phi$ small enough in the form sense. Let next $\tau_k^\pm(\kappa)$ stand for the discrete eigenvalues of the operators $T^\pm$ and $\tau_k (\kappa )$ for the discrete eigenvalues of $T$, all ordered in the same way. As a consequence of (\ref{eq-sandwich}) and the min-max principle we have
$$ 
\tau_k^- (\kappa )\leq  \tau_k (\kappa ) \leq \tau_k^+(\kappa )\,.
$$ 
Furthermore, let $\kappa_k^\pm $ stand for the solutions of $\tau_k (\kappa)^\pm =\alpha$ and let $\kappa_k$ refer similarly to the solution of $\tau_k (\kappa)=\alpha$ which determines by \eqref{BSprinciple} the eigenvalues of $H_{\alpha,\gamma}$. Using the inequality (\ref{eq-sandwich}) together with the fact that $\kappa \mapsto \tau_k$ and $\kappa \mapsto  \tau_k^\pm$ are continuous decreasing functions we conclude that
$$
\kappa_k^- \leq \kappa_k \leq \kappa_k^+\,.
$$
In view of (\ref{eq-exT+}) and (\ref{eq-exT-}) the estimating numbers $\kappa _k^\pm$ correspond to the eigenvalues
$E_k^\pm := - (\kappa ^\mp _k)^2 $ of the operator with $\delta$ interaction supported by a straight segment of the length $L$ and coupling constants
\begin{equation}\label{eq-alpha+}
\alpha_+ = \frac{1}{4\pi}\ln (1-\cos\phi) +C +\alpha
\end{equation}
and
\begin{equation}\label{eq-alpha-}
\alpha_- =  \frac{1}{2}\left(\frac{1}{4\pi}\ln (1-\cos\phi) +\frac{1}{4\pi} \ln 2L^2+\alpha \right)\,.
\end{equation}
On the other hand, arguing as in \cite{EK16}, we find that the above mentioned eigenvalues referring to a coupling constant $\beta\in\R$ behave for large $\beta$ as
$$
\kappa _k =  \Big( 4 e^{2(-2\pi \beta +\psi (1) )}+\frac{(\pi k)^2}{L^2} + o(\beta^{-1})\Big)^{1/2}\,;
$$
inserting (\ref{eq-alpha+}) and (\ref{eq-alpha-}) for $\beta$ in the above expression we get the claim.
\end{proof}

In a similar way one could treat a star $\gamma$ with $N>2$ arms, $\gamma = \gamma_1 \cup...\cup \gamma_N$. If $N-1$ arms are fixed, without loss of generality being supposed to be $\gamma_2,..., \gamma_N$, and the remaining one moves in such a way that the angle $\phi_{12}$ between it and $\gamma_2$ tends to zero, one can again conclude that the spectral threshold escapes to $-\infty$ noting that all the contributions to $Q_{\kappa,\gamma}$ remain bounded except the one coming from the closing angle which explodes in the same way as in the previous proof. Finally, by Lemma~\ref{l:monotone} the conclusion extends to infinite stars, $L=\infty$.

\section{Energy optimization} \label{s:optim}
\setcounter{equation}{0}

Now we can finally pass to our main topic, the question about the star configuration for which the principal eigenvalue of $H_{\alpha,\gamma}$ is maximal. To begin with, we have to recall several notions from algebraic combinatorics \cite{BB09, CK07} inspired by the old and difficult Thomson's problem \cite{Th1904}.

Consider $N$ points $\{x_i\}_{i=1}^N$ placed on a unit sphere $S^2$. They are said to form an $M-$ spherical design if for any polynomial function $S^2\ni x\mapsto p(x)$ of total degree at most $M$ its mean over $\{x_i\}$ coincides with the mean over the sphere,
$$
\int_{S^2} p(x)\,\mathrm{d}x = \frac{1}{N} \sum_{i=1}^N p(x_i )\,.
$$
Suppose further that $m$ denotes the number of the different inner product between the points, then $\{x_i\}_{i=1}^N$ is called a sharp configuration if it is $2m-1$ spherical design. A deep result proved in \cite{CK07}, see also \cite{BB09}, says that any sharp configuration is universally optimal, in other words, it minimizes \emph{any} potential energy described by a strictly completely monotonous function, $(-1)^k f^{(k)} \geq 0$ for all $k\in\N$. This result is valid for sphere configurations in any dimension. In the three-dimensional situation we are interested in here there are five sharp configurations as listed in Table~1 of~\cite{CK07}:
\begin{itemize}
\setlength{\itemsep}{-2pt}
\item $N=2$: antipodal points
\item $N=3$ simplex with the inner product $-\frac12$
\item $N=4$: tetrahedron, i.e. simplex with the inner product $-\frac13$
\item $N=6$: octahedron, i.e. cross polytope with the inner products $-1,\,0$
\item $N=12$: icosahedron with the inner products $-1,\, \pm 5^{-1/2}$.
\end{itemize}
We denote by $\{\bar\sigma_j\}_{j=1}^N$ the sharp configuration of $N$ points, and furthermore, $\sigma$ will be an  $N$-arms star with the arms $\sigma_i$ of the length $L$, emanating from the origin and such that they, or their halfline extensions, contain $ \bar\sigma_j$. The key element of our discussion is the following lemma:
\begin{lemma}
Suppose that $\gamma$ is $N-$ arm star with the corresponding points distribution $\{\bar\gamma_i\}_{i=1}^N$ on the unit sphere. The inequality
\begin{equation}\label{eq-monotT}
  \sum_{1\le i<j\le N} T_{\kappa;s,t}(|\bar\gamma_i - \bar\gamma_j |^2) \geq
  \sum_{1\le i<j\le N} T_{\kappa;s,t}(|\bar\sigma_i - \bar\sigma_j |^2 )
\end{equation}
holds for any $s,t \in [0,L]$, and moreover, (\ref{eq-monotT}) becomes equality if and only if $\{\gamma_j\}_{j=1}^N$ is congruent with a sharp configuration.
\end{lemma}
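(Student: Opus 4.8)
The plan is to recognize the inequality \eqref{eq-monotT} as an instance of the universal optimality of sharp spherical configurations proved in \cite{CK07}. Concretely, I would first use \eqref{distance} to write the Birman--Schwinger kernel as a function of the squared chordal distance $u_{ij}:=|\bar\gamma_i-\bar\gamma_j|^2\in[0,4]$ alone: since $|\gamma_i(s)-\gamma_j(t)|^2=(s-t)^2+st\,u_{ij}$, one has
\[
 T_{\kappa;s,t}(u)=G_\kappa\!\left(\sqrt{(s-t)^2+st\,u}\right)=\frac{1}{4\pi}\,\frac{\e^{-\kappa\sqrt{(s-t)^2+st\,u}}}{\sqrt{(s-t)^2+st\,u}}\,.
\]
For $s,t$ with $st=0$ this expression is independent of $u$, so the two sides of \eqref{eq-monotT} coincide and there is nothing to prove; hence I may fix $\kappa>0$ and $s,t\in(0,L]$ and put $a:=(s-t)^2\ge 0$, $b:=st>0$.

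The technical heart is to show that $u\mapsto T_{\kappa;s,t}(u)$ is strictly completely monotonic on $(0,\infty)$. I would obtain this by subordination: $v\mapsto\sqrt v$ is a Bernstein function (its derivative $\tfrac12 v^{-1/2}$ is completely monotonic), so $v\mapsto\e^{-\kappa\sqrt v}$ is completely monotonic, and multiplying by the completely monotonic $v\mapsto v^{-1/2}$ keeps this property; equivalently, the classical identity
\[
 \frac{\e^{-\kappa\sqrt v}}{\sqrt v}=\frac{1}{\sqrt\pi}\int_0^\infty\lambda^{-1/2}\,\e^{-\kappa^2/(4\lambda)}\,\e^{-\lambda v}\,\D\lambda\,,\qquad v>0\,,
\]
exhibits $v\mapsto\e^{-\kappa\sqrt v}/\sqrt v$ as the Laplace transform of a measure with strictly positive density. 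Precomposing with the increasing affine map $v=a+bu$ only multiplies the $k$-th derivative by $b^{k}>0$, so
\[
 T_{\kappa;s,t}(u)=\frac{1}{4\pi\sqrt\pi}\int_0^\infty\lambda^{-1/2}\,\e^{-a\lambda-\kappa^2/(4\lambda)}\,\e^{-(b\lambda)u}\,\D\lambda
\]
is again the Laplace transform in $u$ of a measure with strictly positive density, i.e. $(-1)^k\partial_u^k T_{\kappa;s,t}(u)>0$ for all $k=0,1,2,\dots$ and all $u>0$.

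With this in hand, the map $\{x_i\}_{i=1}^N\mapsto\sum_{1\le i<j\le N}T_{\kappa;s,t}(|x_i-x_j|^2)$ is a pair-interaction energy on $S^2$ governed by a strictly completely monotonic potential. By the result of \cite{CK07} recalled above (see also \cite{BB09}), among all $N$-point configurations on $S^2$ such an energy is minimized exactly by the sharp configurations $\{\bar\sigma_j\}_{j=1}^N$, uniquely up to an orthogonal transformation. Taking $\{x_i\}=\{\bar\gamma_i\}$ yields \eqref{eq-monotT} for the fixed $s,t$, with equality if and only if $\{\bar\gamma_i\}$ --- and hence the star $\{\gamma_j\}$ itself --- is congruent to the sharp configuration; combining with the trivial case $st=0$ gives the statement for all $s,t\in[0,L]$.

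I expect the main obstacle to be precisely the middle step: one has to see that the three-dimensional Yukawa kernel $G_\kappa$, read as a function of the \emph{squared} distance between the arm projections on $S^2$, is completely monotonic, and --- for the ``only if'' part --- that this monotonicity is strict, i.e. that the representing measure has full support, so that the uniqueness clause of the Cohn--Kumar theorem applies. The apparent complication that $T_{\kappa;s,t}$ also depends on $s,t$ is immaterial, since the whole argument is run for each fixed pair $(s,t)$; once the complete monotonicity is established, the rest is bookkeeping and a direct appeal to \cite{CK07}.
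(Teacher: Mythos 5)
Your proposal is correct and follows essentially the same route as the paper: both reduce the claim to the strict complete monotonicity of $u\mapsto T_{\kappa;s,t}(u)$ on $(0,4]$ for fixed $s,t$ (writing $|\gamma_i(s)-\gamma_j(t)|^2=(s-t)^2+st\,u$) and then invoke Theorem~1.2 of Cohn--Kumar for the optimality and the uniqueness clause. The only difference is technical: the paper checks the monotonicity by directly differentiating $\mathrm{e}^{-\kappa\sqrt{a+bu}}/(4\pi\sqrt{a+bu})$ and exhibiting the sign pattern, whereas you use the subordination (Laplace-transform) representation of the Yukawa kernel, which yields the strictness, and your explicit treatment of the degenerate case $st=0$, somewhat more transparently.
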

\begin{proof}
One has to establish that the function $(0,4]\ni x \mapsto T_{\kappa;s,t} (x) \in \R_+$ is strictly completely monotonous. The function in question equals
$$
T_{\kappa;s,t} = \frac{\e^{-\kappa \sqrt{a+bx}}} {4\pi \sqrt{a+bx}}
$$
with $a=(s-t)^2$ and $b=-2st$, and a straightforward computation gives
$$
 T_{\kappa;s,t}^{(k)}=\frac{(-1)^k }{4\pi} P_{k+1}\left(\frac{ 1}{\sqrt{a+bx}} \right) \frac{x\e^{-\kappa \sqrt{a+bx} }}{2\sqrt{a+bx}}\,,
$$
where $x\mapsto P_n(x)$ is a positive polynomial of $n$-th degree. This establishes the strictly complete monotonicity and the claim is then a direct consequence of Theorem~1.2 in \cite{CK07}.
\end{proof}

Recall that $\epsilon_\gamma$ is the principal eigenvalue of $H_{\alpha,\gamma}$. We know from Lemma~\ref{l:positive} that the corresponding eigenfunction is a multiple of a positive function. If the star refers to a sharp configuration we can say more:
\begin{lemma} \label{le-symetric}
Let $N\in \{2,3,4,6,12\}$, then any $\tilde f_{\sigma} \in \  \ker (Q_{\sqrt{-\epsilon_\sigma},\gamma}-\alpha) $ is symmetric function in the sense that $\tilde f_\sigma =(f_\sigma,...,f_\sigma)$ with a $f_\sigma \in L^2 ([0,L])$.
\end{lemma}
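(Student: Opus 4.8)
The plan is to exploit the additional rotational symmetry carried by a sharp configuration together with the uniqueness of the positive Birman--Schwinger eigenvector supplied by Lemma~\ref{l:positive}. Write $\mathcal H:=\bigoplus_{i=1}^N L^2([0,L])$ and let $G$ be the group of Euclidean isometries fixing the origin that map the point set $\{\bar\sigma_j\}_{j=1}^N\subset S^2$ onto itself. Every $g\in G$ permutes the arms of $\sigma$, inducing a permutation $\pi_g\in S_N$, and since $g$ preserves all the mutual distances $|\bar\sigma_i-\bar\sigma_j|$ we have $|\bar\sigma_{\pi_g(i)}-\bar\sigma_{\pi_g(j)}|=|\bar\sigma_i-\bar\sigma_j|$ for all $i,j$. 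In view of \eqref{BSoperator2} the off-diagonal block $T^{ij}_{\kappa}$ depends on the pair $(i,j)$ only through $|\bar\sigma_i-\bar\sigma_j|^2$, while all the diagonal blocks $T^{ii}_{\kappa}$ coincide as operators; hence $T^{\pi_g(i)\pi_g(j)}_{\kappa}=T^{ij}_{\kappa}$, which is exactly the statement that the unitary $U_g$ on $\mathcal H$ permuting the components according to $\pi_g$ commutes with $Q_{\kappa,\sigma}$ for every $\kappa>0$.

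First I would use this commutation to pin down the ground-state eigenvector. Since $U_g$ commutes with $Q_{\sqrt{-\epsilon_\sigma},\sigma}$, it maps the eigenspace $\ker(Q_{\sqrt{-\epsilon_\sigma},\sigma}-\alpha)$ into itself; by Lemma~\ref{l:positive} this eigenspace is one-dimensional and spanned by a positive function $\tilde f_\sigma$. Because a permutation of components is a real isometry of $\mathcal H$ taking the cone of componentwise-nonnegative functions bijectively onto itself, the vector $U_g\tilde f_\sigma$ is again positive and normalized and lies on the same line; as the only normalized positive vector on a line spanned by a positive function is that function itself, we conclude $U_g\tilde f_\sigma=\tilde f_\sigma$ for every $g\in G$.

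The remaining step is elementary and purely combinatorial: for each of the five admissible values $N\in\{2,3,4,6,12\}$ the symmetry group of the corresponding sharp configuration --- the antipodal pair, the equilateral triangle, the regular tetrahedron, the octahedron, and the icosahedron --- acts transitively on the vertices. Hence for any pair $i,j$ one can choose $g\in G$ with $\pi_g(i)=j$, and the invariance $U_g\tilde f_\sigma=\tilde f_\sigma$ forces the $i$-th and $j$-th components of $\tilde f_\sigma$ to agree. Writing $\tilde f_\sigma=(f_1,\dots,f_N)$ we therefore get $f_1=\dots=f_N=:f_\sigma$, which is the asserted symmetry.

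I do not anticipate a serious difficulty in carrying this out; the main point requiring a moment's care is the verification that the off-diagonal blocks enter $Q_{\kappa,\sigma}$ only through the mutual chordal distances --- so that $U_g$ genuinely intertwines $Q_{\kappa,\sigma}$ with itself --- together with the classical fact that the five polytopes listed above are vertex-transitive. It is precisely the combination of vertex-transitivity with the uniqueness statement of Lemma~\ref{l:positive} that drives the argument, and one sees at once that the conclusion would fail for a star whose projection to $S^2$ is not vertex-transitive.
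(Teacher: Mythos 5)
Your proof is correct: the commutation of the permutation unitaries $U_g$ with $Q_{\kappa,\sigma}$ follows exactly as you say from \eqref{BSoperator2} and \eqref{distance}, the simplicity and positivity supplied by Lemma~\ref{l:positive} force $U_g\tilde f_\sigma=\tilde f_\sigma$, and vertex-transitivity of the five sharp configurations then gives equality of all components. The route is, however, not the one taken in the paper. There the argument (borrowed from \cite{EL17}) is a reduction by invariant subspaces: because every point of $\bar\sigma$ sees the same multiset of chordal distances to the others, the subspace of symmetric vectors $(f,\dots,f)$ in $\bigoplus_{i=1}^N L^2([0,L])$ is invariant under $Q_{\kappa,\sigma}$, hence so is its orthogonal complement, consisting of vectors whose components sum to zero; since the eigenspace at $\alpha$ is spanned by a positive vector (Lemma~\ref{l:positive}), it cannot lie in the zero-sum complement and must therefore be symmetric. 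The comparison is instructive: the paper's decomposition argument needs only the ``distance-regularity'' of the configuration (each vertex has the same distance distribution), which is formally weaker than the vertex-transitivity you invoke, and it is shorter; your group-theoretic argument makes the mechanism more explicit, yields the equality $f_i=f_j$ for each pair directly without having to observe that a simple eigenvector must lie entirely in one of the two invariant subspaces, and generalizes verbatim to any vertex-transitive point configuration. For the five sharp configurations relevant here the two symmetry hypotheses both hold, so the conclusions coincide; in both proofs the essential input beyond the geometry is the positivity and uniqueness statement of Lemma~\ref{l:positive}.
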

\begin{proof}
The argument is similar to the one used in \cite{EL17}: using the fact that the distances between the points of $\bar\sigma$ are fixed, one concludes that the subspace of symmetric functions in $\bigoplus_{i=1}^N L^2 ([0,L])$ is invariant under $Q_{\kappa,\gamma}$ and its orthogonal complement consisting of function with zero mean. Since $\tilde f_{\sigma}$ is positive by Lemma~\ref{l:positive} it cannot belong to the latter.
\end{proof}

Now we are in position to state our main result:
\begin{theorem} Let $N\in \{2,3,4,6,12\}$, then the
energy $\epsilon_\gamma $ assumes the unique maximum for $\gamma$ congruent with $\sigma$.
\end{theorem}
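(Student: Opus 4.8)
The plan is to transcribe the optimization, via the Birman-Schwinger correspondence \eqref{BSprinciple}, into a comparison of the upper edges of the spectra of the operator matrices $Q_{\kappa,\gamma}$, and to test the matrix of an arbitrary star against the \emph{symmetric} principal Birman-Schwinger vector of the sharp configuration. Concretely, I would put $\kappa_\sigma:=\sqrt{-\epsilon_\sigma}$ (the statement presupposes that the principal eigenvalue of $H_{\alpha,\sigma}$ exists, which by the Corollary is the case for $L$ large enough, and for $L=\infty$ whenever $\sigma$ is not a straight line) and choose $\tilde f_\sigma\in\ker(Q_{\kappa_\sigma,\sigma}-\alpha)$; by Lemma~\ref{le-symetric} it has the symmetric form $\tilde f_\sigma=(f_\sigma,\dots,f_\sigma)$ and by Lemma~\ref{l:positive} its component $f_\sigma$ may be taken strictly positive. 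Since $\epsilon_\sigma$ is the \emph{principal} eigenvalue, $\alpha$ is the top of $\sigma(Q_{\kappa_\sigma,\sigma})$, i.e. $(Q_{\kappa_\sigma,\sigma}\tilde f_\sigma,\tilde f_\sigma)=\alpha\,\|\tilde f_\sigma\|^2=\sup\sigma(Q_{\kappa_\sigma,\sigma})\,\|\tilde f_\sigma\|^2$.

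Next I would insert $\tilde f_\sigma$ into the quadratic form of $Q_{\kappa_\sigma,\gamma}$ for an arbitrary $N$-arm star $\gamma$ of the same arm length $L$. The diagonal blocks $T^{ii}_\kappa$ do not depend on the geometry, so only the off-diagonal part changes, and using $f_\sigma(s)f_\sigma(t)>0$, the symmetry $\sum_{i\ne j}=2\sum_{i<j}$, and the pointwise inequality \eqref{eq-monotT} one finds
\begin{align*}
(Q_{\kappa_\sigma,\gamma}\tilde f_\sigma,\tilde f_\sigma) &= N\,(T^{ii}_{\kappa_\sigma}f_\sigma,f_\sigma)+2\int_{(0,L)^2}\Big(\sum_{1\le i<j\le N}T_{\kappa_\sigma;s,t}(|\bar\gamma_i-\bar\gamma_j|^2)\Big)f_\sigma(s)f_\sigma(t)\,\mathrm{d}s\,\mathrm{d}t\\
&\ge N\,(T^{ii}_{\kappa_\sigma}f_\sigma,f_\sigma)+2\int_{(0,L)^2}\Big(\sum_{1\le i<j\le N}T_{\kappa_\sigma;s,t}(|\bar\sigma_i-\bar\sigma_j|^2)\Big)f_\sigma(s)f_\sigma(t)\,\mathrm{d}s\,\mathrm{d}t\\
&=(Q_{\kappa_\sigma,\sigma}\tilde f_\sigma,\tilde f_\sigma)=\alpha\,\|\tilde f_\sigma\|^2\,,
\end{align*}
hence $\sup\sigma(Q_{\kappa_\sigma,\gamma})\ge\alpha$. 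Because $\kappa\mapsto\sup\sigma(Q_{\kappa,\gamma})$ is continuous, non-increasing (monotonicity of the Birman-Schwinger eigenvalues in the energy, cf. the proof of Lemma~\ref{l:monotone}) and tends to $-\infty$ as $\kappa\to\infty$ (Lemmata~\ref{le-supTii} and~\ref{le-T12norm}), letting $\kappa_\gamma$ be the largest solution of $\sup\sigma(Q_{\kappa,\gamma})=\alpha$ we get $\kappa_\gamma\ge\kappa_\sigma$, and \eqref{BSprinciple} identifies $-\kappa_\gamma^2$ as the principal eigenvalue of $H_{\alpha,\gamma}$. Thus $\epsilon_\gamma=-\kappa_\gamma^2\le-\kappa_\sigma^2=\epsilon_\sigma$, which is the asserted maximality; incidentally this also shows that $H_{\alpha,\gamma}$ has a bound state whenever $H_{\alpha,\sigma}$ does.

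For uniqueness I would assume $\epsilon_\gamma=\epsilon_\sigma$, so that $\kappa_\gamma=\kappa_\sigma$ and therefore $\sup\sigma(Q_{\kappa_\sigma,\gamma})=\alpha$ by the definition of $\kappa_\gamma$. Then
$$\alpha\,\|\tilde f_\sigma\|^2\le(Q_{\kappa_\sigma,\gamma}\tilde f_\sigma,\tilde f_\sigma)\le\sup\sigma(Q_{\kappa_\sigma,\gamma})\,\|\tilde f_\sigma\|^2=\alpha\,\|\tilde f_\sigma\|^2\,,$$
so the inequality in the display above is an equality; subtracting the two sides gives
$$\int_{(0,L)^2}\Big(\sum_{1\le i<j\le N}\big[T_{\kappa_\sigma;s,t}(|\bar\gamma_i-\bar\gamma_j|^2)-T_{\kappa_\sigma;s,t}(|\bar\sigma_i-\bar\sigma_j|^2)\big]\Big)f_\sigma(s)f_\sigma(t)\,\mathrm{d}s\,\mathrm{d}t=0\,.$$
The bracketed factor is nonnegative for all $s,t$ by \eqref{eq-monotT} and $f_\sigma(s)f_\sigma(t)>0$ a.e., hence the bracket vanishes a.e., and---$T_{\kappa_\sigma;s,t}$ being continuous in $(s,t)$---for all $s,t$; fixing any $s,t\in(0,L)$, the equality clause of the key Lemma forces $\{\bar\gamma_i\}_{i=1}^N$ to be a sharp configuration, and since for each $N\in\{2,3,4,6,12\}$ this configuration is unique up to congruence, $\gamma$ is congruent with $\sigma$.

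The step I expect to be the main obstacle is making the single trial vector $\tilde f_\sigma$ do double duty --- it must realize $\sup\sigma(Q_{\kappa_\sigma,\sigma})$ \emph{exactly} and at the same time lower-bound $\sup\sigma(Q_{\kappa_\sigma,\gamma})$ --- which is exactly where Lemma~\ref{le-symetric} is indispensable: for a generic number of arms one would have only a positive, not a symmetric, principal Birman-Schwinger vector and the comparison would not close. A second delicate point is the passage from the vanishing of one quadratic-form difference to the pointwise kernel identity, where both the strict positivity of $f_\sigma$ (Lemma~\ref{l:positive}) and the strict complete monotonicity established in the key Lemma are used in an essential way.
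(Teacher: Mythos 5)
Your argument is essentially the paper's own proof: the same comparison of $\sup\sigma(Q_{\kappa,\gamma})$ with $\sup\sigma(Q_{\kappa,\sigma})$ via the symmetric positive Birman--Schwinger eigenvector $\tilde f_\sigma$ (Lemmata~\ref{l:positive} and~\ref{le-symetric}) and the pointwise universal-optimality inequality \eqref{eq-monotT}, followed by the monotonicity in $\kappa$ from \eqref{BSprinciple}. You merely spell out what the paper compresses into ``the inequality is sharp unless $\bar\gamma$ is congruent with $\bar\sigma$'', namely the passage from equality of quadratic forms to the pointwise kernel identity using the strict positivity of $f_\sigma$ and the equality clause of the key Lemma, which is a correct and welcome elaboration rather than a different route.
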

\begin{proof}
Using (\ref{eq-monotT}) together with the fact that the diagonal elements of $Q_{\kappa,\gamma}$ do not depend on the angles,
$$
(f , T_{\kappa, \gamma   }^{ii} f)_{L^2 ([0,L])} =
(f, T_{\kappa, \sigma  }^{ii} f)_{L^2 ([0,L])}\,,
$$
and Lemma~\ref{le-symetric} we get
\begin{align*}
   & \sup Q_{\kappa,\gamma} \geq (Q_{\kappa,\gamma} \tilde f_\sigma, \tilde f_\sigma) \\
   & \hspace{1em} \geq \sum_{1\le i<j\le N} \int_{L\times L}
   T_{\kappa;s,t}((|\bar\gamma_i -\bar\gamma_j |^2))f_\sigma(s) f_\sigma(t)\,\mathrm{d}s
   \mathrm{d}t+\sum_{i=1}^N
   (f_\sigma, T_{\kappa,\gamma}^{ii} f_\sigma )_{L^2 ([0,L])}
  \\
   & \hspace{1em} \geq \sum_{1\le i<j\le N} \int_{L\times L}
   T_{\kappa ;s,t }((|\bar\sigma_i -\bar\sigma_j |^2)) f_\sigma(s) f_\sigma(t)\,\mathrm{d}s
   \mathrm{d}t +\sum_{i=1}^N (f_\sigma, T_{\kappa,\sigma}^{ii} f_\sigma)_{L^2 ([0,L])} \\
   &= \hspace{1em} \sup Q_{\kappa,\sigma}\,,
\end{align*}
and the inequality is sharp unless $\bar\gamma$ is congruent with $\bar\sigma$.
\end{proof}

\begin{remarks}
{\rm (a) Note that the argument works both for any edge lengths giving rise to a discrete spectrum including infinite stars, $L=\infty$.
\\
(b) Finding optimal configurations for other values of $N$ is no doubt a difficult problem. We note that while for an infinite star the answer is independent of $\alpha$ due to the self-similar character of $\gamma$, cf. Lemma~\ref{l:monotone}, this may not be the case if $L<\infty$.
}
\end{remarks}

\subsection*{Acknowledgements}

The research was supported by the Czech Science Foundation (GA\v{C}R) under Grant No. 17-01706S and by the European Union within the project CZ.02.1.01/0.0/0.0/16 019/0000778.


\end{document}